\newcolumntype{K}[1]{>{\centering\arraybackslash}p{#1}}
\Crefname{remark}{Remark}{Remarks}
\Crefname{rmk}{Remark}{Remarks}
\Crefname{dfn}{Definition}{Definitions}
\Crefname{thm}{Theorem}{Theorems}
\Crefname{coro}{Corollary}{Corollaries}
\Crefname{lem}{Lemma}{Lemmas}
\Crefname{examplex}{Example}{Examples}
\Crefname{prop}{Proposition}{Propositions}
\colorlet{mygray}{gray!40}
\declaretheoremstyle[
    bodyfont=\normalfont
]{normalstyle}
\declaretheorem[style=normalstyle,name=Definition]{dfn}
\declaretheorem[name=Theorem]{thm}
\declaretheorem[name=Lemma]{lem}
\declaretheorem[name=Proposition]{prop}
\declaretheorem[style=normalstyle,name=Example]{examplex}
\newenvironment{proofsketch}{%
  \proof}{\endproof}
\newcommand{\citeay}[1]{\citeauthor{#1} [\citeyear{#1}]}
\newcommand{\calA}{\mathcal{A}}
\newcommand{\calB}{\mathcal{B}}
\newcommand{\calN}{\mathcal{N}}
\newcommand{\calL}{\mathcal{L}}
\newcommand{\calS}{\mathcal{S}}
\newcommand{\calRA}{\mathcal{RA}}
\DeclareMathOperator*{\argmax}{arg\,max}
\title{Stabilizing Welfare-Maximizing Decisions via Endogenous Transfers}
\author{
    Joshua Kavner
    \affiliations
    Independent Researcher
    \emails
    joshuakavner.research@gmail.com
}
\begin{document}

\maketitle

\begin{abstract}
Many multi-agent systems depend on collective decisions made by self-interested agents, which raises deep questions about coalition formation and stability. We study social choice with endogenous, outcome-contingent transfers, where agents voluntarily form contracts that redistribute utility depending on the collective decision, allowing for fully strategic coalition formation. We show that under consensus rules, individually rational strong Nash equilibria (IR-SNE) always exist, implementing welfare-maximizing outcomes with feasible transfers, and provide a simple, efficient algorithm to construct them. For more general anonymous, monotonic, and resolute rules, we identify necessary conditions for viable deviations, significantly limiting the possibility of destabilizing coalitions. By bridging cooperative and noncooperative perspectives, our approach shows that transferable utility can achieve core-like stability, restoring efficiency and budget balance even where classical impossibility results generally apply. Overall, this framework offers a practical and robust way to coordinate large-scale strategic multi-agent systems.
\end{abstract}

\section{Introduction}

It is 2050, and the utopia of the Internet of Things has become a reality in transportation. Autonomous vehicles taxi commuters to work, transcontinental shipping is coordinated through autonomous platoons, and algorithms balance intrinsic tradeoffs between safety and efficiency. From a policy maker’s perspective, centralized solutions are appealing: by collecting global information, algorithms can guide traffic flows toward socially optimal outcomes, reduce congestion, and avoid the inefficiencies that arise when agents act myopically (e.g., Braess's paradox \citep{christodoulou2005price}). However, realizing this vision in practice is far from trivial. Computing socially optimal outcomes is often intractable, incentive-compatible preference elicitation is possible only under restrictive conditions, and centralized mechanisms are fragile to communication failures. As a result, purely centralized solutions struggle to deliver robust, reliable performance.

These limitations have motivated research on decentralized multi-agent systems, where self-interested agents act autonomously but coordinate through learning or negotiation \citep{kraus1997negotiation,ephrati1997heuristic}. For example, centralized training with decentralized execution (CTDE) allows agents to learn coordinated policies under shared information during training \citep{lowe2017multi,amato2024introduction}. However, once deployed, agents cannot adapt or negotiate on the fly. In many strategic environments, agents must alternate between competition and cooperation depending on context \citep{axelrod1981evolution}. Coalition formation thus provides a natural middle ground: by forming voluntary coalitions, agents can improve outcomes relative to fully decentralized behavior without requiring full centralization. Still, coalition formation comes with challenges: computing optimal coalitions is NP-hard \citep{sandholm1999coalition}, iterative coalition deviations may fail to converge \citep{meir2017iterative}, and strategic behavior under common social choice rules remains computationally intractable \citep{xia2009complexity}.

In this paper, we propose \emph{endogenous transfers} as a mechanism for stabilizing collective decision-making in multi-agent systems. Collective choices among self-interested agents can naturally be modeled as a social choice problem: agents vote over alternatives, and the outcome depends on the aggregation of their preferences. Rather than prescribing coalitions or enforcing outcomes, our approach allows agents to voluntarily redistribute utility through contracts that are contingent on the chosen alternative. These transfers reshape incentives and expand the space of stable outcomes, allowing coalitions to form and dissolve strategically \citep{jackson2005endogenous}.

Our model is inspired by classical work on bribery and coalitional manipulation in voting, as well as cooperative game-theoretic analyses of social choice \citep{bachrach2011coalitional}. In traditional bribery or manipulation models, coalitions are often assumed to be exogenous and bribes are budget-constrained \citep{elkind2009swap,faliszewski2009hard}. By contrast, in our framework, both coalitions and transfers are fully endogenous: agents decide how to vote and how to contract, anticipating the behavior of others while respecting individual rationality. This implies that agents can offer transfers contingent on a particular alternative winning, forming stable coalitions without any central enforcement.

The central question, then, is whether such voluntary transfers are sufficient to stabilize welfare-maximizing outcomes in the presence of strategic, coalitional behavior. We show that under consensus rules, stable outcomes always exist and can be constructed efficiently, and we characterize fundamental limits on stability under more general voting rules.

\subsection{Our Contributions}

We make the following contributions.

\paragraph{A model of endogenous transfers in social choice.}
We introduce a noncooperative model in which agents may form transfer agreements
in a strategic pre-round, followed by strategic voting under a fixed social
choice rule.
Contracts are conditioned on alternatives winning and allow agents to share the
utility generated by collective decisions.
This framework separates contract formation from preference revelation, enabling
the study of coalition stability without assuming centralized enforcement or
exogenously given coalitions.

\paragraph{Strong Nash equilibrium under consensus rules.}
For consensus rules, which generalize cooperative bargaining scenarios in which
agents can veto unfavorable outcomes, we show that individually rational strong
Nash equilibria (IR-SNE) always exist.
Moreover, every IR-SNE induces a welfare-maximizing outcome supported by a
feasible transfer scheme.
We provide a simple, computationally efficient algorithm for constructing such
equilibria, establishing both existence and implementability.

\paragraph{Necessary conditions for deviations in general.}
We extend our analysis beyond consensus to arbitrary anonymous, monotonic, and
resolute social choice rules.
In this setting, we derive a tight necessary condition for the existence of
IR coalitional deviations.
While this does not resolve known NP-hardness of manipulation for fixed
rules, it substantially restricts the space of profitable deviations and clarifies
when stability cannot be restored via transfers.

\paragraph{A bridge between cooperative and noncooperative social choice.}
Our results admit a natural interpretation in terms of cooperative game theory.
Although the underlying environment is noncooperative, the presence of
transferable utility allows coalitions to redistribute surplus via contracts.
As a result, strong Nash equilibrium coincides with a notion of core stability
with respect to the set of outcomes implementable by the social choice rule.
In particular, under consensus, IR-SNE implements core-stable,
welfare-maximizing allocations despite the absence of centralized mechanism
design.

Finally, our framework offers a complementary perspective on classical
impossibility results in mechanism design.
\citeay{green1977characterization} show that no mechanism can simultaneously achieve
efficiency, budget balance, and strategyproofness in general quasi-linear
domains \citep{nath2019efficiency}.
Rather than contradicting this result, we refine the notion of strategic
robustness.
By allowing transferable utility and coalition-aware contracts, we replace
unconditional truthfulness with a coalition-proof requirement: no group of
agents can profitably deviate given feasible transfers among its members.
In this sense, IR-SNE can be viewed as a refinement of strategyproofness that
restores efficiency and budget balance within the space of coalitionally stable
outcomes.

\subsection{Related Work}

The computational complexity of bribery in elections was first studied by \citeay{faliszewski2006complexity}, who asked whether a preferred candidate can be made the winner by modifying the preferences of a limited number of agents. Subsequent work analyzed bribery under many voting rules, including Borda \citep{brelsford2008approximability}, maximin \citep{faliszewski2011multimode}, Schulze \citep{parkes2012complexity}, $k$-approval \citep{lin2012solving}, STV and ranked pairs \citep{xia2012computing}, and Bucklin and fallback voting \citep{faliszewski2015complexity}. \citeay{faliszewski2008nonuniform} introduced non-uniform bribery prices per agent under a fixed budget, while \citeay{Faliszewski2007LlullAC} and \citeay{elkind2009swap} studied models where the cost of a bribe depends on the swap distance between the original and modified rankings. For comprehensive overviews, see \citeay{baumeister2016preference} and \citeay{faliszewski2016control}.

Bribery is closely related to coalitional manipulation, which studies whether a preferred outcome can be achieved by coordinated misreports from a coalition of agents \citep{conitzer2007elections,xia2009complexity}. Unlike bribery, coalitional manipulation typically assumes the coalition is exogenously given; endogenously selecting coalitions and distributing gains is itself computationally intractable \citep{sandholm1999coalition,chalkiadakis2011computational}. \citeay{bachrach2009cost} introduced the cost of stability, and \citeay{bachrach2011coalitional} showed that bribery can be interpreted as utility-constrained coalition recruitment, linking voting with cooperative game theory.
Our model extends this perspective by explicitly separating preference reports from transfer contracts and by imposing individual rationality on every participating agent, leading to the notion of IR-SNE. This allows us to construct feasible, budget-balanced transfer schemes that stabilize outcomes under consensus and related rules, complementing Bachrach et al.’s existential and complexity results with a constructive, mechanism-aware analysis.

Related questions of side payments and commitments have been studied in game theory \citep{fershtman1991observable,jackson2005endogenous,monderer2009strong,kalai2013cooperation,turrini2013endogenous,grandi2019negotiable,geffner2025maximizing} and multi-agent reinforcement learning \citep{sodomka2013coco,yang2020learning,haupt2024formal,kolumbus2024paying}. A separate literature examines vote trading through competitive equilibrium models \citep{casella2012competitive,casella2021trading,casella2021does}.

\section{Preliminaries}

\begin{paragraph}{Model.}
An instance 
$I = \langle \mathcal{N}, \mathcal{A}, u \rangle$
consists of a set of $n$ agents $\mathcal{N}$, a set of $m$ alternatives 
$\mathcal{A}$, and base utility functions 
$u = (u_1,\ldots,u_n)$, where each
$u_i : \mathcal{A} \to \mathbb{R}_{\ge 0}$. We denote the \emph{social welfare} of an alternative $a$ as $SW(a) = \sum_{i \in \calN} u_i(a)$.
A resolute social choice function $f : \mathcal{L}(\mathcal{A})^n \to \mathcal{A}$
selects a single alternative given votes of the agents. We assume $f$ is anonymous, monotone, and resolute (AMR); it uses a fixed tie-breaking order.

In particular, we focus on the resolute \emph{consensus rule}, where an alternative is chosen only if it is top-ranked by all agents; otherwise, a default alternative $a^* \in \calA$ is selected. Consensus is both anonymous and monotonic.
\end{paragraph}

\begin{paragraph}{Contracts.}
A \emph{contract} is a collection of transfers 
$c:\mathcal N\times\mathcal N\times\mathcal A\to\mathbb R$, where
$c_{i\to j}(a)$ represents a promise from agent $i$ to agent $j$ conditional on alternative $a$.
Contracts are proposed before voting and are enforced conditional on the realized winner.
Each contract induces a net transfer scheme $\tau$ defined by
$\tau_i(a) = \sum_{j\neq i} c_{j\to i}(a) - \sum_{j\neq i} c_{i\to j}(a)$, which satisfies budget balance: $\sum_i \tau_i(a)=0$. 
We assume that the set of contracts is closed under netting, meaning that only the induced transfer scheme $\tau$
affects utilities and strategic behavior, except when the identity of transfer recipients is relevant for coalition formation. The null contract is denoted by $\emptyset$.
\end{paragraph}

\begin{paragraph}{Strategic Voting.}
Given base utilities $u$ and transfers $\tau$, agents' realized
utilities are quasi-linear: if alternative $a$ is chosen, agent $i$
receives $U_i(a)=u_i(a)+\tau_i(a)$. These utilities induce a truthful preference ordering $R_i^*(u+\tau)\in\mathcal L(\mathcal A)$ defined by $a \succ_i^* b \quad \Longleftrightarrow \quad U_i(a) > U_i(b)$,
with ties broken according to a fixed ordering over $\mathcal A$.
A vote $R_i$ is \emph{truthful} if $R_i = R_i^*(u+\tau)$.
Agents submit votes $R_i \in \mathcal L(\mathcal A)$, yielding a vote
profile $P=(R_1,\ldots,R_n)$. For a subset of agents
$S \subseteq \mathcal N$, we denote by
$P_S^*(u+\tau)$ the truthful vote profile of agents in $S$ under
utilities $u+\tau$. We write
$P^* = P_{\mathcal N}^*(u)$ for the fully truthful profile under the
null contract.
\end{paragraph}

\begin{paragraph}{Coalitional Deviation.}
Our definition of coalitional deviation builds on concepts from \emph{unweighted coalitional manipulation} \citep{xia2009complexity} and \emph{bribery} \citep{faliszewski2009hard}, both of which assume truthful behavior by non-deviating agents. These definitions require refinement when stability is evaluated over states $(P,\tau,S)$ or along sequences of deviations, as in iterative voting \citep{meir2017iterative}. In particular, only agents in the deviating coalition are allowed to modify outgoing transfers or vote strategically. Agents outside the coalition may experience changes in their incoming transfers due to the coalition's modifications but do not actively adjust their own outgoing contracts.

\begin{dfn}[Coalitional Deviation]
Given base utilities $u$, a \emph{coalitional deviation} from a state $(P,\tau,S)$ is any state $(P',\tau',S')$ constructed as follows. A coalition of agents $S' \subseteq \calN$ proposes modifications to outgoing contracts $(c'_{i\rightarrow j}(a))_{i \in S', j \in \mathcal{N}, a \in \mathcal{A}}$ relative to the existing contract scheme $c$, inducing a new transfer scheme $\tau'$, and submits new votes $P_{S'}$. Agents outside the coalition, $i \notin S'$, are assumed to vote truthfully with respect to their updated utilities $u+\tau'$, denoted $P^*_{-S'}(u+\tau')$. The resulting vote profile is $P' = (P_{S'}, P^*_{-S'}(u+\tau'))$, yielding outcome $f(P')$.
\label{dfn:coalitional_manipulation}
\end{dfn}

Under Definition~\ref{dfn:coalitional_manipulation}, an agent $i$ belongs to the deviating coalition $S'$ if and only if they either vote strategically with respect to $u+\tau'$, or actively modify their outgoing contracts. This notion is \emph{maximal}: agents may join a coalition in favor of an alternative even if they are not individually pivotal. Non-participating agents vote truthfully with respect to the updated utilities $u+\tau'$, modeling the independence of coalitions: each coalition chooses its deviation assuming all other agents respond truthfully to their adjusted utilities. 
This contrasts with an alternative, \emph{sticky} notion, where non-participating agents neither update their votes nor their contracts. In Appendix~A, we show that equilibrium may fail to exist under this sticky variant, even in trivial settings.
\end{paragraph}

\begin{dfn}[Individually Rational Deviation]
Let $(P,\tau,S)$ be a state and $(P',\tau',S')$ a coalitional deviation. The deviation $(P',\tau',S')$ is \emph{individually rational (IR)} if
\[
U'_i(f(P')) \ge U_i(f(P)) \quad \forall i \in S',
\]
with strict inequality for at least one $i \in S'$. We call $(P', \tau', S')$ \emph{IR-feasible} from $(P, \tau, S)$.
\end{dfn}

\begin{dfn}[Strong Nash Equilibrium]
A state $(P,\tau,S)$ is a \emph{strong Nash equilibrium (SNE)} if no coalition $S'$ has an IR deviation. It is an \emph{IR-SNE} if it is additionally IR-feasible from the truthful state $(P^*,\emptyset,\emptyset)$.
\end{dfn}

\begin{examplex}
Consider alternatives $\calA = \{1,2\}$, agents $\calN = \{1, \ldots, 5\}$, $f$ as the consensus rule with default $a^* = 2$, and utilities $u = (u_i(a))_{i \in \calN, a \in \calA}$ defined as:
\[
u = \begin{pmatrix}
2 & 1 & 2 & 1 & 11 \\
4 & 1 & 3 & 2 & 3
\end{pmatrix}.
\]
The truthful vote is $R^*_5 = (1 \succ 2)$ and $R^*_i = (2 \succ 1)$, $i \in \calN \setminus \{5\}$, so that $f$ returns $2$ by default. A feasible transfer scheme is
\[
\tau = \begin{pmatrix}
2 & 0 & 1 & 1 & -4 \\
0 & 0 & 0 & 0 & 0
\end{pmatrix} \implies u+ \tau = \begin{pmatrix}
4 & 1 & 3 & 2 & 7 \\
4 & 1 & 3 & 2 & 3
\end{pmatrix}.
\]
In $u+\tau$, all agents $i \in \calN$ vote with $R_i = (1 \succ 2)$, with $P = (R_i)_{i \in \calN}$ so that $f(P) = 1$. The set of participating agents $S = \calN$. All agents in $S$ weakly prefer alternative $1$ in $u+\tau$ to $2$ in $u$, so the state $(P, \tau, S)$ is IR-feasible from the truthful state $(P^*, \emptyset, \emptyset)$. It can be shown that there is no $(P', \tau', S')$ so that agents in $S'$ weakly prefer $u_i(f(P'))+\tau'_i(f(P'))$ to $u_i(f(P))+\tau_i(f(P))$, so $(P, \tau, S)$ is an IR-SNE.
\end{examplex}

\section{Main Results}
We focus on the consensus rule $f$, where an alternative $a \in \calA$ is chosen if and only if all agents rank $a$ first in their reported preferences; otherwise, a default $a^* \in \calA$ is selected. This setting is motivated both practically and technically. Practically, consensus generalizes cooperative bargaining scenarios, in which multiple agents must determine how to split a surplus across a Pareto frontier: what benefits one agent necessarily limits another. In these settings, the default $a^*$ represents a disagreement value that agents receive if negotiations fail \citep{thomson1994cooperative}. Consensus thus provides a concise description of real-world multi-agent planning problems in which any agent can veto a plan, forcing all agents to receive $a^*$ \citep{kraus1997negotiation}.

Technically, consensus is appealing because it allows for concise necessary and sufficient conditions for SNE, and its structure provides intuition that extends to other anonymous, monotonic, and resolute (AMR) rules. To see why, consider a state $(P, \tau, S)$ in a transferable-utility social choice game. A potential SNE can fail in exactly two ways: any coalitional deviation $(P', \tau', S')$ starting from $(P, \tau, S)$ with $f(P) = b$ either (i) changes the outcome, with $f(P') \neq b$, or (ii) preserves the outcome, with $f(P') = b$ but redistributes utilities in a way that strictly benefits the coalition.

Under consensus, the first type of deviation is tightly constrained: by Lemma~\ref{lem:no_Astar_transfer}, there is no IR-feasible state in which the default alternative \(a^*\) becomes the winner, a consequence of the \emph{full coverage} requirement (Definition~\ref{dfn:full_coverage}) that all agents rank a non-default alternative first. The second type of deviation preserves the outcome but reallocates transfers; Lemma~\ref{lem:sw_max_swap} characterizes when such redistributions are compatible with IR. Together, these lemmas pin down the structure of IR-SNE under consensus and illustrate general principles for AMR rules: welfare maximization is necessary for equilibrium (Proposition~\ref{prop:non_welfare_max_is_not_SNE}); outcome-changing deviations require sufficiently large reallocations of utility (Proposition~\ref{prop:stability_via_indifferent_donors}); and outcome-preserving deviations impose stricter constraints on feasible transfers than those characterized in Lemma~\ref{lem:sw_max_swap}.

We now turn to the technical groundwork that allows us to formalize these ideas, starting with the role of truthful states and the observability of agent participation.

\subsection{Truthful Considerations and Observable Participation}
\label{sec:truthful_considerations}
Many of our proofs focus on the contract-adjusted utilities $\tau$, rather than the underlying individual transfers $(c_{i \rightarrow j}(a))_{i,j \in \calN, a \in \calA}$. We assume throughout that net transfers along transitive chains cancel out, so that an agent’s utility depends only on the net transfer $\tau_i(a)$ when evaluating overall stability. This allows us to reason in terms of $\tau$ rather than tracking the full contract matrix $c$.

We emphasize, however, that this simplification is a sleight of hand: if the underlying contracts $c$ are ignored entirely, so that agents only care how much they donate without regard to whom, this can break equilibrium, as shown in Appendix~A. Lemma~\ref{lem:observable_participation} is a direct corollary of this simplification: it lets us deduce whether $i \in S$ from changes in $\tau$, but this reasoning is valid only because $\tau$ captures the net effect of transitive contracts $c$.

\begin{lem}[Observable Participation]
Given base utilities $u$, consider states $(P, \tau, S)$ and $(P', \tau', S')$. If there exists an alternative
$a \in \mathcal{A}$ such that either (1) $\tau_i(a) \le 0$ and $\tau'_i(a) < \tau_i(a)$, or (2) $\tau_i(a) \ge 0$ and $\tau'_i(a) < 0$, then $i \in S'$.
\label{lem:observable_participation}
\end{lem}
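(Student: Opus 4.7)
The plan is to prove the contrapositive: assuming $i \notin S'$, I will show that neither condition (1) nor (2) can hold. The argument rests on two observations. First, by the coalitional deviation definition, $i \notin S'$ implies that $i$'s outgoing contracts are unchanged, so $c'_{i \to j}(a) = c_{i \to j}(a)$ for all $j$ and $a$; in particular, $i$'s total outflow $O_i(a) := \sum_j c_{i \to j}(a)$ is preserved, while the new inflow $I'_i(a) := \sum_j c'_{j \to i}(a)$ is non-negative. Second, by the netting simplification of Section~\ref{sec:truthful_considerations}, I may assume without loss of generality that the original contract scheme is in canonical bipartite form at each alternative: an agent with $\tau_i(a) \ge 0$ has no outgoing contracts at $a$, and an agent with $\tau_i(a) \le 0$ has no incoming contracts at $a$. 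This representation is strategically equivalent to any other realization of the same $\tau$, since only $\tau$ affects utilities.

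With these observations in hand, the case analysis is immediate. In Case (2), $\tau_i(a) \ge 0$ together with the canonical form gives $O_i(a) = 0$; since $i \notin S'$ preserves the outflow, $\tau'_i(a) = I'_i(a) - 0 \ge 0$, contradicting $\tau'_i(a) < 0$. In Case (1), $\tau_i(a) \le 0$ together with the canonical form gives $I_i(a) = 0$, so $\tau_i(a) = -O_i(a)$; since $O_i(a)$ is preserved and $I'_i(a) \ge 0$, we obtain $\tau'_i(a) = I'_i(a) - O_i(a) \ge -O_i(a) = \tau_i(a)$, contradicting $\tau'_i(a) < \tau_i(a)$.

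The main subtlety is justifying the canonical-form reduction. The ``closed under netting'' assumption must be read as saying that any contract matrix realizing a given $\tau$ is strategically equivalent, but the reduction also needs to respect coalition membership $S'$. This is fine because $S'$ is defined by which agents actively modify outgoing contracts relative to a baseline, and netting-equivalent rewrites of both $c$ and $c'$ preserve that modification structure. Given this, the canonical bipartite form is a legitimate representative of the equivalence class, and the bounds on $O_i(a)$ and $I'_i(a)$ above yield the desired contradictions.
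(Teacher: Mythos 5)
Your proof is correct and follows essentially the same approach as the paper, which argues in one line that since non-participants cannot modify outgoing contracts and net transfers cancel along chains, a strictly negative change at $a$ can only come from $i$'s own actions. Your canonical bipartite reduction is simply a more explicit formalization of that netting step, and the implicit assumption you both rely on (incoming contract amounts are non-negative, so others cannot unilaterally impose outflows on $i$) is the same.
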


\begin{proof}
In both cases, agent $i$ experiences a strictly negative change in payoff under $a \in \calA$. Since non-participating agents of $S'$ cannot actively modify contracts, and net transfers cancel along chains by assumption, such changes can only result from $i$’s own strategic actions, implying $i \in S'$.
\end{proof}


We can therefore identify coalition members based on changes in net transfers. This reasoning allows us to analyze potential deviations in general AMR rules: if the chosen alternative $f(P)$ is not a social welfare maximizer, some coalition of agents can strictly increase their total utility by jointly reallocating transfers toward a welfare-maximizing alternative. Thus, non-welfare-maximizing states cannot be SNE. 

\begin{prop}[Non-Welfare-Maximizing States Are Not SNE]
\label{prop:non_welfare_max_is_not_SNE}
Given base utilities $u$ and an AMR rule $f$, consider a state $(P, \tau, S)$. If $f(P) \notin \arg\max_{a \in \calA} SW(a)$, then $(P, \tau, S)$ is not an SNE.
\end{prop}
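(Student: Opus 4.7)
The plan is to show that the grand coalition $S' = \calN$ can profitably deviate by steering the outcome to a welfare-maximizing alternative and redistributing the surplus via transfers. Let $b = f(P)$ and fix any $a^\dagger \in \argmax_{a \in \calA} SW(a)$, so that $\Delta := SW(a^\dagger) - SW(b) > 0$ by hypothesis. The idea is to funnel the extra welfare $\Delta$ evenly across all agents so that every one of them ends up strictly better off under $a^\dagger$ than under the current state.

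Concretely, I would define the new transfer scheme by keeping $\tau'_i(a) = \tau_i(a)$ for $a \neq a^\dagger$ and setting
\[
\tau'_i(a^\dagger) \;=\; u_i(b) + \tau_i(b) - u_i(a^\dagger) + \tfrac{\Delta}{n}.
\]
Summing over $i$ gives $\sum_i \tau'_i(a^\dagger) = SW(b) - SW(a^\dagger) + \Delta = 0$, so $\tau'$ is budget balanced. Moreover, $U'_i(a^\dagger) = u_i(a^\dagger) + \tau'_i(a^\dagger) = U_i(b) + \Delta/n > U_i(b) = U_i(f(P))$ for every $i$, which is the strict IR improvement required for every member of the coalition. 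I would then have the coalition submit a vote profile $P'$ in which each agent ranks $a^\dagger$ first. Any agent who is not already differentiated from $(P,\tau,S)$ by a contract change can be placed in $S'$ by appending a trivial chain of bilateral ``dummy'' contracts that net to zero but count as outgoing contract modifications, using Lemma~\ref{lem:observable_participation} to certify their membership in $S'$.

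The main obstacle is verifying that $f(P') = a^\dagger$ from the AMR properties alone. My approach is to invoke surjectivity as a mild standing assumption on reasonable voting rules: pick any profile $Q$ with $f(Q) = a^\dagger$; then monotonicity implies that raising $a^\dagger$ to the top of every voter's ranking from $Q$ preserves the winner, so the unanimity-style profile $P'$ yields $f(P') = a^\dagger$. With that in hand, $(P', \tau', \calN)$ is an IR coalitional deviation from $(P, \tau, S)$, contradicting SNE. I expect the only remaining care is in justifying the unanimity step cleanly—either by folding surjectivity into AMR explicitly, or by restricting attention to welfare-maximizers in the image of $f$, so that the monotonicity argument applies verbatim.
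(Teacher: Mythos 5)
Your argument is correct and takes essentially the same route as the paper's proof: the grand coalition deviates to a welfare-maximizing alternative and finances a strict improvement for every agent out of the surplus $\Delta$ via a budget-balanced transfer scheme (the paper merely asserts the existence of such a $\tau'$ and requires only one strict improvement, whereas you write the transfers down explicitly and improve everyone). The unanimity step you flag is a genuine subtlety, but it is equally present in the paper's own proof, which asserts $f(P')=b$ from monotonicity and full coverage without further justification; your patch via surjectivity plus monotonicity from a witness profile $Q$ is sound here precisely because $S'=\calN$ leaves no truthful voters, so the coalition may submit exactly the monotone improvement of $Q$.
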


\begin{proofsketch}
If $f(P)$ is not a social welfare maximizer, the grand coalition can jointly reallocate transfers toward an alternative $b \in \arg\max_{a \in \calA} SW(a)$ such that $b$ fully covers $f(P)$. This deviation weakly increases total utility for the chosen alternative for all agents, showing that $(P, \tau, S)$ cannot be stable. The full proof is provided in Appendix~B.2.
\end{proofsketch}

Consensus strengthens the connection between social welfare maximization and equilibrium stability through the notion of \emph{full coverage}. While Proposition~\ref{prop:non_welfare_max_is_not_SNE} shows that non-welfare-maximizing outcomes cannot be stable under any AMR rule, consensus imposes a stricter condition: for a non-default alternative $b$ to win, every agent must rank $b$ first, meaning $b$ must fully cover the truthful winner $f(P^*)$ in post-transfer utilities. This ensures that no coalition can profitably deviate, making social welfare maximization a necessary condition for IR-SNE under consensus.

\begin{dfn}[Fully Covered]
Given utilities $u$ and a transfer scheme $\tau$, an alternative $b \in \mathcal{A}$ is said to be \emph{fully covered} by $u+\tau$ if
\[
U_i(b) \ge U_i(a), \quad \forall i \in \mathcal{N}, \forall a \in \mathcal{A}.
\]
\label{dfn:full_coverage}
\end{dfn}

\begin{prop}[Truthful State Is SNE Under Consensus]
Given base utilities $u$ and a resolute consensus rule $f$, consider the truthful state $(P^*, \emptyset, \emptyset)$. If $f(P^*) \in \arg\max_{a \in \mathcal{A}} SW(a)$, then $(P^*, \emptyset, \emptyset)$ is an SNE.
\label{prop:truthful_is_SNE}
\end{prop}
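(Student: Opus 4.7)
The plan is to suppose, toward contradiction, that some IR coalitional deviation $(P', \tau', S')$ exists from the truthful state $(P^*, \emptyset, \emptyset)$, and to rule out every possible winner $w' := f(P')$. Let $w := f(P^*)$ denote the truthful consensus outcome. The workhorse observation is that Lemma~\ref{lem:observable_participation}, applied with the initial transfer scheme $\tau \equiv 0$, forces $\tau'_i(a) \ge 0$ for every non-coalition agent $i \notin S'$ and every alternative $a$; combined with budget balance $\sum_i \tau'_i(a) = 0$, this yields the single clean inequality $\sum_{i \in S'} \tau'_i(a) \le 0$ that I invoke repeatedly.

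I would then split on how $w'$ relates to $w$ and the default $a^*$. If $w' = w$, IR collapses to $\tau'_i(w) \ge 0$ for each $i \in S'$ with strict inequality for at least one, and summing directly contradicts the bound above. If $w' \neq w$ and $w' \neq a^*$, consensus forces every agent, coalition or not, to rank $w'$ first in $P'$; since each $i \notin S'$ votes truthfully under $u + \tau'$, this supplies the auxiliary inequality $u_i(w') + \tau'_i(w') \ge u_i(w) + \tau'_i(w)$. Adding these to the coalition's IR inequalities and telescoping via budget balance at $w'$ together with non-negativity of $\sum_{i \notin S'} \tau'_i(w)$ yields $SW(w') > SW(w)$, contradicting welfare maximality of $w$. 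In the remaining case $w' = a^* \neq w$, a non-default truthful winner $w$ must be ranked first by every agent under $u$, so $u_i(w) \ge u_i(a^*)$ holds universally; summing over $S'$ against the IR inequality again forces $\sum_{i \in S'} \tau'_i(a^*) > 0$, contradicting the bound.

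The main obstacle I anticipate is the bookkeeping in the middle case: one must simultaneously track transfers at both $w$ and $w'$ because the coalition may legally ``bribe'' non-coalition voters into ranking $w'$ first under $u + \tau'$, and the contradiction only surfaces after invoking budget balance specifically at $w'$ while separately using non-negativity of $\tau'_i(w)$ for $i \notin S'$. The other two cases reduce to one-line consequences of observable participation once the right inequality is written down.
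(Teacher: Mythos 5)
Your proof is correct and follows essentially the same route as the paper's: a case split on the new winner $w'$, using Lemma~\ref{lem:observable_participation} from the zero-transfer state to force $\tau'_i(\cdot)\ge 0$ for non-coalition agents, then combining IR, budget balance, and welfare maximality of $f(P^*)$ to derive a contradiction in each case. The only difference is that you also explicitly dispose of the outcome-preserving case $w'=w$ (which the paper's proof skips, though it is immediate by the same summation), and in the case $w'\notin\{w,a^*\}$ you telescope to $SW(w')>SW(w)$ directly rather than contradicting full coverage pointwise — a cosmetic rearrangement of the same inequalities.
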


\begin{proofsketch}
By IR and the definition of consensus, any alternative in a potential deviation must fully cover the truthful winner $f(P^*)$. Since $f(P^*)$ is a social welfare maximizer, only other welfare-maximizing alternatives can meet this requirement. Any redistribution that changes participating agents’ utilities would violate IR, so no IR-feasible deviation exists. Hence, $(P^*, \emptyset, \emptyset)$ is an SNE. The full proof is provided in Appendix~B.2.
\end{proofsketch}

\subsection{IR-SNE for Consensus}
\label{sec:IR_SNE_consensus}

Building on the discussion of truthful states and full coverage in Sec.~\ref{sec:truthful_considerations}, we turn to the construction and characterization of IR-SNE under the consensus rule \(f\). In such equilibria, the selected alternative \(f(P)\) must maximize social welfare and induce full coverage in post-transfer utilities \(u+\tau\), which tightly restricts IR-feasible deviations.

Within this framework, Lemma~\ref{lem:no_Astar_transfer} establishes that no IR-feasible deviation can make the default alternative \(a^*\) a winner, while Lemma~\ref{lem:sw_max_swap} characterizes when utilities associated with a welfare-maximizing alternative can be redistributed without violating IR. Together, these results pin down the constraints governing IR-SNE under consensus.

\begin{lem}
Given base utilities $u$ and consensus rule $f$, consider a state $(P,\tau,S)$ that is IR-feasible from the truthful state $(P^*,\emptyset,\emptyset)$.
Suppose that $f(P^*) = a^* \notin \arg\max_{a \in \calA} SW(a)$, that $f(P)=b \in \arg\max_{a \in \calA} SW(a)$, and that $b$ has full coverage under $u+\tau$. Then there is no state $(P', \tau', S')$ with $f(P') = a^*$ that is IR-feasible from $(P, \tau, S)$.
\label{lem:no_Astar_transfer}
\end{lem}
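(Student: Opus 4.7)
The plan is to proceed by contradiction. Suppose some state $(P',\tau',S')$ with $f(P')=a^*$ is IR-feasible from $(P,\tau,S)$; the goal is to derive a contradiction by combining IR feasibility, full coverage, budget balance of transfers, and observable participation.

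The first step chains individual rationality with full coverage on a per-agent basis. IR feasibility of $(P',\tau',S')$ from $(P,\tau,S)$ yields $U'_i(a^*) \ge U_i(b)$ for every $i \in S'$, with strict inequality at some $i_0 \in S'$. Full coverage of $b$ under $u+\tau$ supplies $U_i(b) \ge U_i(a^*) = u_i(a^*)+\tau_i(a^*)$ for every $i \in \calN$. Composing these two inequalities eliminates the $U_i(b)$ term and yields $\tau'_i(a^*) \ge \tau_i(a^*)$ for every $i \in S'$, with strict inequality at $i_0$.

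The second step aggregates this via budget balance. Summing the per-agent inequalities over $S'$ gives $\sum_{i\in S'}\tau'_i(a^*) > \sum_{i\in S'}\tau_i(a^*)$, and the identity $\sum_{i\in\calN}\tau_i(a^*) = 0 = \sum_{i\in\calN}\tau'_i(a^*)$ then forces $\sum_{j\notin S'}\tau'_j(a^*) < \sum_{j\notin S'}\tau_j(a^*)$. Hence there exists some $j \notin S'$ with $\tau'_j(a^*) < \tau_j(a^*)$, so $j$ experiences a strictly negative change in net transfer under $a^*$. Applying Lemma~\ref{lem:observable_participation} to this $j$ and alternative $a^*$ will then place $j \in S'$, contradicting $j \notin S'$ and completing the proof.

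The main obstacle is the final application of Lemma~\ref{lem:observable_participation}. Its two literal cases cover strict decreases when $\tau_j(a^*) \le 0$ and when $\tau'_j(a^*) < 0$, which together handle most situations directly; the delicate sub-case is $\tau_j(a^*) > 0$ with $0 \le \tau'_j(a^*) < \tau_j(a^*)$, where $j$'s net receipt shrinks without crossing zero. To close this, I will lean on the closed-under-netting principle emphasized in Section~\ref{sec:truthful_considerations}: any such reduction in $j$'s incoming contracts induced by $S'$ is utility-equivalent to $j$ opening a new outgoing contract of the same magnitude, which by Definition~\ref{dfn:coalitional_manipulation} marks $j$ as an active participant and hence forces $j \in S'$. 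This reduces the residual sub-case to the regime the lemma handles verbatim and restores the contradiction.
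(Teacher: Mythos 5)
There is a genuine gap at the final step, and it sits exactly in the sub-case you flagged: $\tau_j(a^*) > 0$ with $0 \le \tau'_j(a^*) < \tau_j(a^*)$. Lemma~\ref{lem:observable_participation} deliberately does not cover this regime, and your patch for it is not valid. Definition~\ref{dfn:coalitional_manipulation} explicitly permits agents outside the coalition to experience changes in their \emph{incoming} transfers due to the coalition's modifications without joining $S'$: a coalition member who had been paying $j$ at $a^*$ can simply withdraw part of that payment, shrinking $j$'s positive net receipt while $j$ remains entirely passive. Reclassifying that withdrawal as ``$j$ opening a new outgoing contract'' via the netting principle conflates utility-equivalence of contract representations with attribution of contractual control; if that move were legitimate, Lemma~\ref{lem:observable_participation} would collapse to ``any decrease in $\tau_i(a)$ implies $i \in S'$,'' which is precisely the statement the paper avoids making. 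So you cannot conclude $j \in S'$ from $\tau'_j(a^*) < \tau_j(a^*)$ alone, and your contradiction does not go through.

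The paper closes this hole with an accounting argument rather than a membership argument. It applies Lemma~\ref{lem:observable_participation} only to rule out $U'_k(a^*) < u_k(a^*)$ for non-participants $k$ (equivalently $\tau'_k(a^*) < 0$, which does fall under the lemma's two cases), and then accepts that non-participants may lose transfers at $a^*$ so long as they stay at or above their base utility, giving $\sum_{i \notin S'} U'_i(a^*) \ge \sum_{i \notin S'} u_i(a^*)$. Adding this to the strict inequality $\sum_{i \in S'} U'_i(a^*) > \sum_{i \in S'} U_i(b)$ and invoking budget balance of $\tau'$ yields $\sum_{i \in S'} u_i(a^*) > \sum_{i \in S'} U_i(b)$, which contradicts $U_i(b) \ge u_i(a^*)$ for all agents (a consequence of IR-feasibility from the truthful state together with full coverage and Lemma~\ref{lem:observable_participation}). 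Your first two steps are sound and essentially parallel the paper's setup; replacing your third step with this bound on how far non-participants' utilities at $a^*$ can fall would repair the proof.
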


\begin{proofsketch}
The argument proceeds by contradiction. Suppose that starting from a state $(P,\tau,S)$ in which the welfare-maximizing alternative $b$ is implemented with full coverage, there exists an IR-feasible deviation to a state $(P',\tau',S')$ that implements $a^*$.
IR requires that all agents in $S'$ weakly prefer $a^*$ under $u+\tau'$ to $b$ under $u+\tau$, with at least one agent strictly better off. Because $b$ has full coverage, any such strict improvement for some agent must be financed by reducing the utility of other agents at $a^*$. The key step is to show that this redistribution cannot be supported without violating IR.

Lemma \ref{lem:observable_participation} implies that any agent whose utility at $a^*$ is reduced below their base utility must be actively participating in the deviation, i.e., must belong to $S'$. Hence, any agents outside $S'$ cannot supply the utility needed to fund a strict improvement for members of $S'$. This forces the total utility at $a^*$ under $u+\tau'$ to exceed the total utility of $b$ for agents in $S'$, contradicting full coverage, which guarantees that $b$ weakly dominates $a^*$ in total utility.
Thus, no IR-feasible transition from $(P,\tau,S)$ back to an $a^*$-implementing state exists. 
The full proof is presented in Appendix B.1.
\end{proofsketch}

As discussed above, a state $(P,\tau,S)$ can be destabilized in two ways. An \emph{outcome-changing} deviation alters reports so that $f(P') \neq f(P)$. An \emph{outcome-preserving} deviation leaves the selected alternative unchanged but allows a coalition to reallocate transfers so as to increase its members’ welfare while maintaining IR and full coverage. Lemma~\ref{lem:sw_max_swap} characterizes exactly when such outcome-preserving deviations are feasible.

The key insight is that any profitable redistribution must reduce the utility of some \emph{receiver} agent $i$ with $\tau_i(b) > 0$, where $b = f(P)$. Whether this reduction is feasible depends on the slack in $U_i(b)$ relative to $i$’s next-best alternatives. If $U_i(b)$ strictly exceeds all other utilities, slack is realized. Otherwise, slack is only potential and must be unlocked by adjusting transfers at tied next-best alternatives. By Lemma~\ref{lem:observable_participation}, any agent whose utility falls below their base utility must participate in the deviation. As a result, slack can be extracted from $i$ only if transfers at $i$’s next-best alternatives can be relaxed without forcing additional agents into the coalition. Lemma~\ref{lem:sw_max_swap} identifies three mutually exclusive scenarios in which this is possible, depending on whether slack is realized, supported by loose donors, or blocked by binding donors.

\begin{dfn}[Next-best alternative]
Given base utilities $u$, consider a state $(P, \tau, S)$ with $f(P) = b$ so that $b$ has full coverage under $u+\tau$.
For every $i \in \calN$, let 
\[
NBA_i = \arg\max_{a \neq b} U_i(a)
\] 
denote the set of next-best alternatives to $b$. 
\end{dfn}

\begin{dfn}[Binding NBA alternatives]
Given base utilities $u$, consider a state $(P, \tau, S)$ with $f(P) = b$ so that $b$ has full coverage under $u+\tau$.
Fix a receiver $i$, with $\tau_i(b) > 0$, and $\tau_i(x) > 0$ for all $x \in NBA_i$ such that $U_i(b) = U_i(x)$.
For any $x \in NBA_i$, let
\[
J_x = \{ j \in \calN : \tau_j(x) < 0 \};
\]
\[
\bar J_x = \{ j \in J_x : U_j(x) = U_j(b) \}.
\]
We denote by
\[
\calB_i = \{ x \in NBA_i : \bar J_x = J_x \}
\]
the set of \emph{binding} NBA alternatives where all donors are tight.
\label{dfn:binding_NBAs}
\end{dfn}

\begin{lem}
Given base utilities $u$ and consensus rule $f$, consider a state $(P,\tau,S)$ that is IR-feasible from the truthful state $(P^*,\emptyset,\emptyset)$.
Suppose that $f(P^*) = a^* \notin \arg\max_{a \in \calA} SW(a)$, that $f(P)=b \in \arg\max_{a \in \calA} SW(a)$, and that $b$ has full coverage under $u+\tau$. 
There exists an IR-feasible coalitional deviation $(P', \tau', S')$ with $f(P') \in \argmax_{a \in \calA} SW(a)$ if and only if $\exists i \in \calN$ with $\tau_i(b) > 0$ so that at least one of the following hold:
\begin{enumerate}
    \item[P1.] \emph{(Strict slack at $b$.)} \\ $U_i(b) > \max_{a \in \mathcal{A} \setminus \{b\}} U_i(a)$;
    \item[P2.] \emph{(Loose donors at every NBA.)}\\ $\forall x \in NBA_i$, both $\tau_i(x) > 0$ and $\exists j \in \calN$ so that $\tau_j(x)<0$ and $U_j(x) < U_j(b)$;
    \item[P3.] \emph{(Potential slack blocked only by binding donors.)} \\ Both:
    \begin{itemize}[leftmargin=0em]
        \item $\forall x \in NBA_i$, $\tau_i(x) > 0$;
        \item $\exists j \in \calN$ so that $\forall x \in \calB_i$, $\tau_j(x) < 0$ and $U_j(x) = U_j(b)$.
    \end{itemize}
\end{enumerate}
\label{lem:sw_max_swap}
\end{lem}

\begin{proofsketch}
The proof proceeds in two directions.

\emph{Sufficiency.}  
The three cases correspond to progressively weaker ways of extracting transferable slack from a receiver at $b$, while preserving IR and full coverage constraints 
throughout the deviation.
Fix a state $(P,\tau,S)$ with $f(P)=b$ and full coverage. Suppose $\exists i$ with $\tau_i(b)>0$ satisfying one of (P1)-(P3). In each case, we construct a small redistribution of transfers that lowers $U_i(b)$ by some $\epsilon>0$ and reallocates this slack to other agents, while preserving both IR for the deviating coalition and full coverage of $b$.

Under (P1), agent $i$ has \emph{realized} slack: $U_i(b)$ strictly exceeds all other alternatives. A donor at $b$ can therefore withdraw transfers from $i$ without changing $i$’s ranking, and capture the slack while maintaining IR.

Under (P2), slack is only \emph{potential}: $U_i(b)$ ties with each next-best alternative. However, at every such alternative $x \in NBA_i$ there exists a donor $j_x$, $\tau_{j_x}(x)<0$, whose slack is not tight, with $U_{j_x}(x) < U_{j_x}(b)$. These donors can each withdraw a small amount of utility from $i$ at their respective alternatives without violating IR or full coverage of $b$, collectively freeing slack at $b$ that can be redistributed.

Under (P3), some NBAs are \emph{binding}: all donors at those alternatives are tight. Slack can still be extracted if there exists a single agent who is tight across all binding alternatives. This agent can absorb the necessary adjustment across those alternatives and receive the released slack at $b$.

In all cases, the deviating coalition excludes $i$, so the reduction in $i$’s utility does not violate IR, and the modified transfers preserve full coverage of $b$. Hence an IR-feasible deviation with $f(P')=b$ exists.

\emph{Necessity.}  
Conversely, suppose there exists an IR-feasible deviation $(P',\tau',S')$ with $f(P')\in \argmax_{a \in \calA} SW(a)$. Since $b$ has full coverage, any strict gain for agents in $S'$ must be financed by reducing the utility of some non-participating receiver $k$ with $\tau_k(b) > 0$. By Lemma \ref{lem:observable_participation}, this reduction must preserve full coverage for $k$ at all alternatives.

If $k$ has realized slack at $b$, then (P1) holds. Otherwise, slack must come from $k$’s next-best alternatives. If at every such alternative there is a loose donor, then (P2) holds. If some alternatives are binding, feasibility requires a single agent who can absorb the required adjustments across all binding alternatives, yielding (P3). If none of these conditions holds, no IR-preserving redistribution can reduce $U_k(b)$, contradicting existence of the deviation.

Thus, conditions (P1)-(P3) are exhaustive and characterize exactly when an IR-feasible deviation preserving a welfare-maximizing outcome exists. The full proof is provided in Appendix~B.2.
\end{proofsketch}

Our results combine to yield a complete characterization of IR-SNE. The key conclusion is that the truthful state is an IR-SNE if and only if the truthful winner is a social-welfare maximizer. Intuitively, Lemma \ref{lem:no_Astar_transfer} rules out deviations that revert the outcome to a non-welfare-maximizing alternative once a welfare-maximizing outcome with full coverage is reached. Lemma \ref{lem:sw_max_swap} then characterizes when coalitions can profitably reallocate transfers while preserving the same welfare-maximizing outcome. Condition~3 of the theorem below encodes the slack constraints identified in Lemma \ref{lem:sw_max_swap}, ensuring that no such welfare-preserving deviation can exist.

Moreover, no IR-feasible deviation can select an outcome outside $\arg\max_{a \in \calA} SW(a)$. Any such deviation would necessarily violate full coverage. In particular, Lemma~5 in Appendix~B.1 implies that full coverage pins down utilities across all welfare-maximizing alternatives, ruling out IR-feasible deviations to non-welfare-maximizing outcomes.

\begin{thm}[Characterization of IR-Strong Nash Equilibria]
\label{thm:IR_SNE_characterization}
Fix base utilities $u$ and consensus rule $f$.
A state $(P,\tau,S)$ is an IR-SNE if and only if all of the following hold:
\begin{enumerate}
    \item \textbf{(IR-feasibility.)}
    $(P,\tau,S)$ is IR-feasible from the truthful state $(P^*,\emptyset,\emptyset)$
    (or $(P,\tau,S)=(P^*,\emptyset,\emptyset)$).
    \item \textbf{(Efficiency and coverage.)}
    $f(P) \in \arg\max_{a\in\mathcal{A}} SW(a)$ and $f(P)$ has full coverage under $u+\tau$.
    \item \textbf{(No profitable welfare-preserving deviation.)}
    For every $i \in \calN$ so that $\tau_i(f(P)) > 0$, none of the conditions of Lemma~\ref{lem:sw_max_swap} are satisfied.
\end{enumerate}
\end{thm}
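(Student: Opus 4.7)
The plan is to prove the biconditional by a direct case analysis on the type of potential deviation, leveraging the structural lemmas already established. Each condition of the theorem corresponds to ruling out a distinct class of coalitional deviation, and the two directions essentially invert the applications of those lemmas.

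For the forward direction, I will assume $(P,\tau,S)$ is an IR--SNE and derive each condition. Condition 1 is immediate from the definition of IR--SNE. For the efficiency part of condition 2, I will apply the contrapositive of Proposition~\ref{prop:non_welfare_max_is_not_SNE}. For the full coverage part, I will argue that if some agent $i$ satisfies $U_i(a) > U_i(f(P))$ for some $a \in \calA$, then either a unilateral vote change by $i$ (which under consensus forces the outcome to $a^*$, profitable when $U_i(a^*) > U_i(f(P))$) or a coalitional deviation with appropriately chosen transfers would be IR-feasible, contradicting SNE. Condition 3 follows by applying the contrapositive of the sufficiency direction of Lemma~\ref{lem:sw_max_swap}: if any of P1--P3 held for some receiver, that lemma would explicitly produce an IR-feasible welfare-preserving deviation.

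For the reverse direction, I will assume conditions 1--3 hold and show that no IR-feasible coalitional deviation $(P',\tau',S')$ can exist. I will partition deviations by the value of $f(P')$. If $f(P') = a^*$, this is ruled out by Lemma~\ref{lem:no_Astar_transfer}, whose hypotheses are precisely delivered by condition 2. If $f(P') = f(P)$, the necessity direction of Lemma~\ref{lem:sw_max_swap} would force one of P1--P3 to hold for some receiver, contradicting condition 3. For the remaining case $f(P') = b' \notin \{a^*, f(P)\}$, I will invoke Lemma~\ref{lem:full_coverage_consensus} from the appendix together with Lemma~\ref{lem:observable_participation}: full coverage of $f(P)$ under $u+\tau$ pins down agent utilities across welfare-maximizing alternatives tightly enough that any strict IR improvement for members of $S'$ at $b'$ would have to be financed by lowering the utility of some agent outside $S'$ below their truthful baseline, which is impossible since such agents cannot modify outgoing contracts.

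The main obstacle I anticipate is the reverse direction's treatment of deviations to alternative welfare-maximizers $b' \neq f(P)$, since this case is not directly handled by Lemmas~\ref{lem:no_Astar_transfer} or~\ref{lem:sw_max_swap}. It relies on the auxiliary full-coverage structural result together with careful accounting to show that the net transfers needed to simultaneously support $b'$, preserve IR for participants, and respect the constraints on non-participants' contracts leave no room for strict improvement. The forward direction's derivation of full coverage is similarly delicate, because it requires exhibiting a concrete profitable deviation whenever coverage fails, rather than merely invoking an existing lemma.
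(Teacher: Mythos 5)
Your decomposition is essentially the paper's own: the paper gives no standalone proof of Theorem~\ref{thm:IR_SNE_characterization}, but assembles it exactly as you do, from Proposition~\ref{prop:non_welfare_max_is_not_SNE} (efficiency is necessary), the sufficiency and necessity directions of Lemma~\ref{lem:sw_max_swap} (outcome-preserving deviations), Lemma~\ref{lem:no_Astar_transfer} (deviations back to $a^*$), and Lemma~\ref{lem:full_coverage_consensus} with Lemma~\ref{lem:observable_participation} (deviations to other alternatives). Your reverse-direction case split on $f(P')$ is the intended argument, and your handling of the $f(P') = b' \notin \{a^*, f(P)\}$ case via budget balance and the constraint that non-participants cannot lower their own outgoing transfers is the right accounting.

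The genuine gap is the one you yourself flag: the forward-direction derivation of full coverage. Your argument is a disjunction, and the second branch --- ``a coalitional deviation with appropriately chosen transfers would be IR-feasible'' whenever some agent $i$ has $U_i(a) > U_i(f(P))$ --- does not go through in general. Concretely, suppose $i \in S$ votes strategically for $b = f(P)$ while $U_i(a) > U_i(b) \ge U_i(a^*)$ for some $a$ with $SW(a) < SW(b)$. The unilateral reversion to truthful voting yields $a^*$ and is unprofitable for $i$; and no coalition can make $a$ the consensus winner while satisfying IR, since budget balance caps $\sum_j U'_j(a)$ at $SW(a) < SW(b) = \sum_j U_j(b)$ but IR plus the truthful-voting requirement on non-participants would force $U'_j(a)$ to be at least (roughly) $U_j(b)$ for everyone. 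So the failure of full coverage need not produce any profitable deviation, and the step as sketched does not establish necessity of condition~2. What \emph{is} immediate is that full coverage holds for all agents \emph{outside} $S$ (they top-rank $f(P)$ truthfully under $u+\tau$); extending it to strategic voters in $S$ requires either an additional argument or an implicit restriction on the states under consideration. The paper elides this entirely (it asserts that ``every agent must rank $b$ first, meaning $b$ must fully cover the truthful winner,'' conflating reported rankings with post-transfer preferences), so you have not missed something the paper supplies --- but your proof as written would stall at exactly this point, and you should either close it or note that condition~2's coverage requirement must be weakened to agents in $\calN \setminus S$.
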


\begin{algorithm}[t]
\caption{IR-SNE Construction for Consensus}
\label{alg:SW_stacked}
\begin{algorithmic}[1]
\REQUIRE Utilities $(u_i(a))_{i \in \mathcal{N},\, a \in \mathcal{A}}$, consensus rule $f$ with $f(P^*) = a^* \notin \argmax_{a \in \calA} SW(a)$
\ENSURE Transfer scheme $\tau$ with $(P, \tau, S)$ is an IR-SNE

\STATE Initialize $\tau_i(a) \leftarrow 0$ for all $i \in \mathcal{N}, a \in \mathcal{A}$
\STATE Order alternatives by social welfare: $1 \succeq 2 \succeq \cdots \succeq m$

\FORALL{$i \in \mathcal{N}$}
    \STATE $\tau_i(a^*) \leftarrow 0$; $U_i(a^*) \leftarrow u_i(a^*)$
\ENDFOR

\FORALL{$i \in \mathcal{N}$, $t = a^*+1, a^*+2 \ldots, m$}

    \STATE $U_i(t) \leftarrow \frac{SW(t)}{SW(a^*)} u_i(a^*)$; $\tau_i(t) \leftarrow U_i(t) - u_i(t)$

\ENDFOR

\FOR{$t = a^*-1, a^*-2, \ldots, 1$}

    \STATE Let $R_t = \{ i \in \mathcal{N} : u_i(t) < U_i(t+1) \}$; $D_t = \{ i \in \mathcal{N} : u_i(t) > U_i(t+1) \}$

    \FORALL{$i \in R_t$}
        \STATE $\tau_i(t) \leftarrow U_i(t+1) - u_i(t)$
    \ENDFOR

    \STATE Distribute total deficit 
    $-\sum_{i \in R_t} \tau_i(t)$
    among agents in $D_t$ so that for all $i \in D_t$,
    $u_i(t) + \tau_i(t) \ge U_i(t+1)$

    \FORALL{$i \in \mathcal{N}$}
        \STATE $U_i(t) \leftarrow u_i(t) + \tau_i(t)$
    \ENDFOR
\ENDFOR

\end{algorithmic}
\end{algorithm}

Theorem \ref{thm:IR_SNE_characterization} yields three transparent and easy-to-check conditions for verifying whether a state $(P,\tau,S)$ constitutes an IR-SNE. In particular, it separates the problem of equilibrium existence into feasibility, outcome selection, and the absence of profitable welfare-preserving deviations. This structure naturally suggests a constructive approach.

Algorithm \ref{alg:SW_stacked} exploits this characterization to explicitly construct an IR-SNE by iteratively building transfers that secure full coverage for a welfare-maximizing alternative while eliminating all exploitable slack. The following proposition establishes correctness and efficiency of this construction.

\begin{prop}
Given base utilities $u$ and consensus rule $f$ with default $a^* \in \calA$, consider the truthful state $(P^*, \emptyset, \emptyset)$ with $f(P^*) = a^* \notin \argmax_{a \in \calA} SW(a)$. Let $D_t = \{i \in \calN : u_i(t) > U_i(t+1)\}$. Algorithm \ref{alg:SW_stacked} yields a contract scheme $\tau$ so that $(P, \tau, S)$ is an IR-SNE, where $f(P) \in \argmax_{a \in \calA} SW(a)$ and
\[
S = \bigcup_{t < a^*} D_t \cup \left\{ i \in \calN : \argmax_{a \in \calA} U_i(a) \neq \argmax_{a \in \calA} u_i(a) \right\}
\]
is the set of agents that actively participate in the algorithm via donations or strategic voting. It runs in $\mathcal{O}(nm)$ time.
\label{prop:alg_proof}
\end{prop}

\begin{proofsketch}
Algorithm~\ref{alg:SW_stacked} constructs the transfer scheme $\tau$ in a stepwise manner to ensure that the welfare-maximizing alternative $b$ becomes the equilibrium outcome, while the default alternative $a^*$ receives no transfers.  

Intuitively, lines 3--5 assign $\tau_i(a^*) = 0$ for all $i \in \calN$, so that $a^*$ cannot benefit from additional contributions. Lines 6--8 ensure that $a^*$ weakly dominates every alternative with lower social welfare in $u+\tau$, so they do not pose a risk of deviation and can be ignored for equilibrium considerations. Since $a^*$ is not a social welfare maximizer, lines 9--18 then iterate through alternatives $t \in \{a^*-1, \ldots, 1\}$, partitioning agents into \emph{receivers} $R_t$ (who have smaller utility on $t$ than on $t+1$) and \emph{donors} $D_t$ (who have strictly greater utility). Receivers are granted transfers to make up their deficit, while donors contribute to satisfy budget balance and maintain full coverage of $t$.

By construction, any agent $i$ with $\tau_i(b) > 0$ does not satisfy any of the slack conditions (P1)--(P3) from Lemma~\ref{lem:sw_max_swap}. Combined with Lemma~\ref{lem:no_Astar_transfer}, which rules out IR-feasible deviations to $a^*$, and Lemma~5, which rules out deviations to non-welfare-maximizing alternatives, we conclude that no IR-feasible deviation exists.  

A full verification of conditions (P1)--(P3) for all agents is provided in Appendix~B.1.
\end{proofsketch}

\subsection{Stability Criteria for General AMR Rules}

Extending the characterization of IR-SNE beyond the consensus rule is challenging.
For many AMR rules, such as Maximin and Ranked Pairs, even deciding whether a fixed coalition can manipulate an outcome is NP-hard \citep{xia2009complexity}. Related hardness results extend to several bribery variants when the number of alternatives is unbounded \citep{faliszewski2008nonuniform,faliszewski2009hard}.
Rather than attempting a full characterization under a fixed voting rule, we instead identify a necessary condition for coalitional stability that is independent of the particular AMR rule and can be checked efficiently.

Proposition~\ref{prop:non_welfare_max_is_not_SNE} implies that any IR-SNE outcome must select a SW-maximizing alternative.
This leaves a natural question: \emph{even among welfare maximizers, when can a coalition reallocate transfers to induce a profitable deviation?}
To address this question, we introduce a notion that captures how much negative transfer can be reassigned without violating individual rationality.

\begin{dfn}[Reallocatable Amount]
\label{dfn:RA}
Given a state $(P,\tau,S)$ and an alternative $\hat a \in \mathcal{A}$, the \emph{reallocatable amount} to agent $j \in \mathcal{N}$ from a set of donors $S \subseteq \mathcal{N}$ over $\hat a$ is
\begin{align*}
\calRA(j,S,\hat a)
&=\sum_{i \in S} \max\{0,-\tau_i(\hat a)\}
+ \max\{0,-\tau_j(\hat a)\}.
\end{align*}
\end{dfn}

Intuitively, the reallocatable amount measures how much utility can be redirected toward a target agent from agents who are indifferent between the current outcome and a candidate alternative, together with the agent’s own negative transfers.
The following proposition uses this notion to characterize when an IR-feasible deviation exists under \emph{some} AMR rule.
Crucially, the voting rule is allowed to be chosen adversarially, isolating a purely economic constraint on stability.

\begin{prop}[IR-Stability via Indifferent Donors]
\label{prop:stability_via_indifferent_donors}
Fix base utilities $u$ and consider a state $(P,\tau,S)$ that is IR-feasible from the truthful state $(P^*,\emptyset,\emptyset)$.
Suppose that $f(P^*) = a^* \notin \arg\max_{a \in \calA} SW(a)$, that $f(P)=b \in \arg\max_{a \in \calA} SW(a)$, and that $b$ has full coverage under $u+\tau$.

Then there exists an AMR rule $f$ and an IR-feasible state $(P',\tau',S')$ with $f(P') = \hat a \in \calA \setminus \{b\}$ 
\emph{if and only if} there exists $j \in \mathcal{N}$ such that
\[
\calRA(j, S_j(\hat a), \hat a) > U_j(b) - U_j(\hat a),
\]
where
\[
S_j(\hat a) = \{ i \in \mathcal{N} \setminus \{j\} : U_i(b) = U_i(\hat a), \tau_i(\hat{a})<0 \}.
\]
\end{prop}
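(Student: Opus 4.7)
The plan is to prove both directions of the equivalence: sufficiency by an explicit construction, and necessity by a contrapositive accounting. Since the AMR rule $f$ is chosen existentially, the rule can be taken to be maximally permissive, so the essential constraints on a deviation come from IR for coalition members and Lemma~\ref{lem:observable_participation} for non-participants.

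For the sufficiency direction, fix some $j$ with $\calRA(j, S_j(\hat a), \hat a) > U_j(b) - U_j(\hat a)$ and form the coalition $S' = \{j\} \cup S_j(\hat a)$, modifying transfers only at $\hat a$. Each indifferent donor $i \in S_j(\hat a)$ redirects the entirety of $|\tau_i(\hat a)|$ to $j$, and if $\tau_j(\hat a) < 0$, agent $j$ additionally stops its own donation. Non-coalition recipients absorb the loss, with their positive $\tau$ at $\hat a$ decreasing toward zero, which is permitted by Lemma~\ref{lem:observable_participation}. The resulting change in $j$'s transfer equals $\calRA(j, S_j(\hat a), \hat a)$ and hence exceeds $U_j(b) - U_j(\hat a)$, giving $j$ strict IR; members of $S_j(\hat a)$ keep $\tau'_i(\hat a) = \tau_i(\hat a)$ and satisfy IR with equality. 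Full coverage of $b$ combined with the strict gain makes $\hat a$ uniquely top-ranked for $j$ under $u+\tau'$, so $f$ can be taken to be any AMR rule that elects $\hat a$ when at least one agent ranks $\hat a$ first---for example, the rule returning $\hat a$ if some agent top-ranks it and a fixed default otherwise, which is easily checked to be anonymous, monotonic, and resolute.

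For the necessity direction, I argue by contrapositive: if the $\calRA$ inequality fails for every $j$, no deviation exists. Two structural consequences of the failing hypothesis are central. First, there are no indifferent donors, since any such $i$ would satisfy $\calRA(i, S_i(\hat a), \hat a) \geq |\tau_i(\hat a)| > 0 = U_i(b) - U_i(\hat a)$, contradicting the hypothesis; hence $S_j(\hat a) = \emptyset$ for every $j$. Second, for every donor $m$, $|\tau_m(\hat a)| \leq U_m(b) - U_m(\hat a)$. Given a putative deviation with strict gainer $j^* \in S'$, IR forces every coalition member's post-deviation $\tau$ at $\hat a$ to rise by at least its IR gap, so each coalition donor's effective post-IR donation capacity is at most $|\tau_m(\hat a)| - (U_m(b) - U_m(\hat a)) \leq 0$. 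Thus no net funding can flow from other coalition members to $j^*$, and $j^*$'s own contribution from stopping its own donation is at most $|\tau_{j^*}(\hat a)| \leq U_{j^*}(b) - U_{j^*}(\hat a)$. Summing yields an upper bound on $j^*$'s net gain that contradicts strict IR.

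The main obstacle is the necessity bookkeeping. The $j$ in the condition need not coincide with the strict gainer $j^*$, so one must also rule out the possibility that a strict donor with large $|\tau_m(\hat a)|$ supplies extra funding; this loophole closes only because the failing hypothesis forces every donor's post-IR capacity to be non-positive. Tying the maximum non-coalition absorption to the coalition's post-IR outgoing capacity, rather than to the recipients' own $\tau$, is the subtler step.
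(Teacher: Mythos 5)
Your proof is correct in both directions, and the sufficiency construction is essentially the paper's: form $S' = \{j\} \cup S_j(\hat a)$, redirect the indifferent donors' outgoing contracts (and $j$'s own reclaimed donation) toward $j$, let non-coalition receivers absorb the loss under the protection of Lemma~\ref{lem:observable_participation}, and elect $\hat a$ via a top-scan rule (the paper uses a lexicographic rule, proved AMR in a separate lemma; your two-level variant is the same idea). Where you genuinely diverge is the necessity direction. The paper fixes the strict gainer $j$, isolates the surplus $\beta = (U'_j(\hat a) - U_j(\hat a)) - \calRA(j, S_j(\hat a), \hat a) > 0$, and performs a two-case analysis on where $\beta$ is funded (a fresh donation at $\hat a$, or a redirection by a non-indifferent existing donor), exhibiting in each case a specific agent forced into $S'$ whose IR is violated. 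You instead extract a structural consequence of the negated hypothesis first --- there can be no indifferent donors at $\hat a$ at all, since any such agent would itself witness the $\calRA$ inequality --- so that $\calRA(m, S_m(\hat a), \hat a)$ collapses to $\max\{0, -\tau_m(\hat a)\}$, every donor's redirectable capacity is dominated by its own IR gap, and a single aggregate budget-balance computation over $S'$ contradicts the strict gain. Your route is arguably cleaner and makes the role of the hypothesis more transparent; the paper's is more local, naming the particular agent whose IR breaks. Both arguments lean on the same modeling subtleties --- that funding must be traced to coalition members' outgoing contracts rather than to recipients' net $\tau$, and that netting lets one equate a donor's redirectable capacity at $\hat a$ with $\max\{0,-\tau_m(\hat a)\}$ --- and you correctly flag this as the delicate step; in the final write-up, state the aggregate inequality $\sum_{m \in S'} \bigl(\tau'_m(\hat a) - \tau_m(\hat a)\bigr) \le \sum_{m \in S'} \max\{0, -\tau_m(\hat a)\}$ explicitly, since that is exactly where the contract-control assumption does its work.
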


\begin{proofsketch}
The key idea is that any IR-feasible deviation toward an alternative $\hat a$ requires contributions from agents who are indifferent between $\hat a$ and the current winner $b$. 
The \emph{reallocatable amount} $\calRA(j,S_j(\hat a),\hat a)$ measures the maximum utility that can be redirected to a beneficiary $j$ using these indifferent donors and $j$'s own negative transfers. 
If this amount exceeds the difference $U_j(b) - U_j(\hat a)$, a suitably chosen AMR rule can make $\hat a$ the winner, providing sufficiency. Conversely, if no agent satisfies this inequality, any attempt to fund a gain at $\hat a$ would require contributions from agents who strictly prefer $b$, violating IR; this establishes necessity. 
Thus, the reallocatable-amount condition captures precisely when a coalition can profitably induce an alternative outcome under some AMR rule. The full proof is in Appendix B.2.
\end{proofsketch}

Proposition~\ref{prop:stability_via_indifferent_donors} demonstrates that full coverage alone is insufficient for IR-SNE. 
In addition to ensuring $U_i(f(P)) \geq U_i(f(P^*))$ for all agents, stability also requires that no agent can be subsidized via reallocations from indifferent donors to overturn the outcome. 
As an immediate corollary, Lemma~\ref{lem:no_Astar_transfer} follows by setting $\hat a = a^*$, since the reallocatable-amount condition fails for all agents.

When the AMR rule $f$ is fixed, passing the reallocatable-amount test is no longer sufficient to guarantee an IR-feasible deviation, reflecting the classical computational hardness of manipulation and bribery problems. 
For example, swap bribery remains NP-complete for $k$-approval with $k \ge 3$ \citep{elkind2009swap}. 
Nevertheless, the reallocatable-amount inequality serves as a polynomial-time filter for candidate deviations, ruling out many alternatives that cannot be sustained under any AMR rule.

\section{Discussion}


In this paper, we introduced an endogenous bribery mechanism that allows agents to coordinate plans in the presence of coalitional strategic behavior, along with a solution concept that is computationally simple and guaranteed to exist under the consensus rule. At the same time, our results expose several limitations and open problems that must be addressed before such mechanisms can be deployed in practice.

First, multi-agent planning is inherently combinatorial, since each agent may need to take different actions, leading to an exponential number of plans. As a result, computing IR-SNE may be intractable in realistic settings. This challenge motivates future work on simplifying the representation of the problem, exploiting structure in preferences, and developing approximation algorithms that can scale to larger instances.

Second, the welfare consequences of introducing an exchange market warrant further investigation. Although the proposed mechanism explicitly accounts for coalitional strategic behavior, the resulting equilibria may reduce overall efficiency or fairness compared to outcomes where agents act independently or within small cliques of agents \citep{kearns2008graphical}. These concerns are particularly salient when agents have unequal pricing power and when fairness is a key design consideration. Promising directions for future research include extending the framework to online settings, where agents or alternatives arrive and depart over time, as well as exploring auction-based mechanisms for contract formation.

Finally, our analysis adopts Pareto dominance as the basis for individual rationality, implicitly assuming that agents are willing to participate in any coalition that weakly benefits them. This abstraction overlooks the possibility that agents may object to outcomes in which others receive disproportionately large gains, particularly when they have strategic leverage through credible non-participation. For example, consider two agents with default payoffs $(0, 0)$ and two feasible outcomes: $(A) = (1, 1000)$ and $(B) = (2, 100)$. Although both outcomes are Pareto optimal, agent 1 may refuse to participate unless agent 2 agrees to outcome $(B)$, illustrating how leverage and commitment power can shape equilibrium selection. Modeling such behavior is closely related to the problem of optimal commitment, which is computationally hard and remains an important direction for future work \citep{littman2001implicit,conitzer2006computing}.












\clearpage
\bibliographystyle{named}
\bibliography{mybib}

@article{green1977characterization,
  title={Characterization of satisfactory mechanisms for the revelation of preferences for public goods},
  author={Green, Jerry and Laffont, Jean-Jacques},
  journal={Econometrica: Journal of the Econometric Society},
  pages={427--438},
  year={1977},
  publisher={JSTOR}
}

@article{axelrod1981evolution,
  title={The evolution of cooperation},
  author={Axelrod, Robert and Hamilton, William D},
  journal={science},
  volume={211},
  number={4489},
  pages={1390--1396},
  year={1981},
  publisher={American Association for the Advancement of Science}
}

@article{fershtman1991observable,
  title={Observable contracts: Strategic delegation and cooperation},
  author={Fershtman, Chaim and Judd, Kenneth L and Kalai, Ehud},
  journal={International Economic Review},
  pages={551--559},
  year={1991},
  publisher={JSTOR}
}

@incollection{thomson1994cooperative,
  author    = {William Thomson},
  title     = {Cooperative Models of Bargaining},
  booktitle = {Handbook of Game Theory with Economic Applications},
  editor     = {Robert J. Aumann and Sergiu Hart},
  volume     = {2},
  pages      = {1237--1284},
  year       = {1994},
  publisher  = {Elsevier},
}

@article{kraus1997negotiation,
  title={Negotiation and cooperation in multi-agent environments},
  author={Kraus, Sarit},
  journal={Artificial Intelligence},
  volume={94},
  number={1-2},
  pages={79--97},
  year={1997},
  publisher={Elsevier}
}

@article{ephrati1997heuristic,
  title={A heuristic technique for multi-agent planning},
  author={Ephrati, Eithan and Rosenschein, Jeffrey S},
  journal={Annals of Mathematics and Artificial Intelligence},
  volume={20},
  number={1},
  pages={13--67},
  year={1997},
  publisher={Springer}
}

@article{sandholm1999coalition,
  title={Coalition structure generation with worst case guarantees},
  author={Sandholm, Tuomas and Larson, Kate and Andersson, Martin and Shehory, Onn and Tohm{\'e}, Fernando},
  journal={Artificial Intelligence},
  volume={111},
  number={1-2},
  pages={209--238},
  year={1999},
  publisher={Elsevier}
}

@inproceedings{littman2001implicit,
  title={Implicit negotiation in repeated games},
  author={Littman, Michael L and Stone, Peter},
  booktitle={International Workshop on Agent Theories, Architectures, and Languages},
  pages={393--404},
  year={2001},
  organization={Springer}
}

@inproceedings{christodoulou2005price,
  title={The price of anarchy of finite congestion games},
  author={Christodoulou, George and Koutsoupias, Elias},
  booktitle={Proceedings of the Thirty-Seventh Annual ACM Symposium on Theory of Computing},
  pages={67--73},
  year={2005}
}

@article{jackson2005endogenous,
  title={Endogenous games and mechanisms: Side payments among players},
  author={Jackson, Matthew O and Wilkie, Simon},
  journal={The Review of Economic Studies},
  volume={72},
  number={2},
  pages={543--566},
  year={2005},
  publisher={Wiley-Blackwell}
}

@inproceedings{conitzer2006computing,
  title={Computing the optimal strategy to commit to},
  author={Conitzer, Vincent and Sandholm, Tuomas},
  booktitle={Proceedings of the 7th ACM Conference on Electronic Commerce},
  pages={82--90},
  year={2006}
}

@inproceedings{faliszewski2006complexity,
  title={The complexity of bribery in elections},
  author={Faliszewski, Piotr and Hemaspaandra, Edith and Hemaspaandra, Lane A},
  booktitle={Proceedings of the AAAI Conference on Artificial Intelligence},
  volume={6},
  pages={641--646},
  year={2006}
}

@article{conitzer2007elections,
  author  = {Vincent Conitzer and Tuomas Sandholm and J{\'e}r{\^o}me Lang},
  title   = {When are elections with few candidates hard to manipulate?},
  journal = {Journal of the ACM},
  volume  = {54},
  number  = {3},
  pages   = {14--es},
  year    = {2007}
}

@inproceedings{Faliszewski2007LlullAC,
  author    = {Piotr Faliszewski and Edith Hemaspaandra and
               Lane A. Hemaspaandra and J{\"o}rg Rothe},
  title     = {Llull and Copeland Voting Broadly Resist Bribery and Control},
  booktitle = {Proceedings of the Twenty-Second AAAI Conference on Artificial Intelligence},
  pages      = {724--730},
  year       = {2007},
  publisher  = {AAAI Press}
}

@incollection{kearns2008graphical,
  title={Graphical games},
  author={Kearns, Michael},
  booktitle={The New Palgrave Dictionary of Economics},
  pages={1--4},
  year={2008},
  publisher={Springer}
}

@inproceedings{faliszewski2008nonuniform,
  author    = {Faliszewski, Piotr},
  title     = {Nonuniform Bribery},
  booktitle = {Proceedings of the 7th International Joint Conference on Autonomous Agents and Multiagent Systems},
  pages     = {1569--1572},
  year      = {2008},
  publisher = {International Foundation for Autonomous Agents and Multiagent Systems}
}

@inproceedings{brelsford2008approximability,
  title={Approximability of Manipulating Elections.},
  author={Brelsford, Eric and Faliszewski, Piotr and Hemaspaandra, Edith and Schnoor, Henning and Schnoor, Ilka},
  booktitle={Proceedings of the AAAI Conference on Artificial Intelligence},
  volume={8},
  pages={44--49},
  year={2008}
}

@article{monderer2009strong,
  title={Strong mediated equilibrium},
  author={Monderer, Dov and Tennenholtz, Moshe},
  journal={Artificial Intelligence},
  volume={173},
  number={1},
  pages={180--195},
  year={2009},
  publisher={Elsevier}
}

@inproceedings{bachrach2009cost,
  title={The cost of stability in coalitional games},
  author={Bachrach, Yoram and Elkind, Edith and Meir, Reshef and Pasechnik, Dmitrii and Zuckerman, Michael and Rothe, J{\"o}rg and Rosenschein, Jeffrey S},
  booktitle={International Symposium on Algorithmic Game Theory},
  pages={122--134},
  year={2009},
  organization={Springer}
}

@inproceedings{elkind2009swap,
  title={Swap bribery},
  author={Elkind, Edith and Faliszewski, Piotr and Slinko, Arkadii},
  booktitle={International Symposium on Algorithmic Game Theory},
  pages={299--310},
  year={2009},
  organization={Springer}
}

@article{faliszewski2009hard,
  title={How hard is bribery in elections?},
  author={Faliszewski, Piotr and Hemaspaandra, Edith and Hemaspaandra, Lane A},
  journal={Journal of Artificial Intelligence Research},
  volume={35},
  pages={485--532},
  year={2009}
}

@inproceedings{xia2009complexity,
  author = {Xia, Lirong and Zuckerman, Michael and
            Procaccia, Ariel D. and Conitzer, Vincent and
            Rosenschein, Jeffrey S.},
  title = {Complexity of Unweighted Coalitional Manipulation under Some Common Voting Rules},
  booktitle = {Proceedings of the 21st International Joint Conference on Artificial Intelligence},
  pages = {348--353},
  year = {2009},
}

@book{chalkiadakis2011computational,
  title={Computational aspects of cooperative game theory},
  author={Chalkiadakis, Georgios and Elkind, Edith and Wooldridge, Michael},
  year={2011},
  publisher={Morgan \& Claypool Publishers}
}

@inproceedings{bachrach2011coalitional,
  title={Coalitional voting manipulation: A game-theoretic perspective},
  author={Bachrach, Yoram and Elkind, Edith and Faliszewski, Piotr},
  booktitle = {Proceedings of the Twenty-Second International Joint Conference on Artificial Intelligence},
  volume={10},
  pages={978--981},
  year={2011}
}

@article{faliszewski2011multimode,
  title={Multimode control attacks on elections},
  author={Faliszewski, Piotr and Hemaspaandra, Edith and Hemaspaandra, Lane A},
  journal={Journal of Artificial Intelligence Research},
  volume={40},
  pages={305--351},
  year={2011}
}

@inproceedings{parkes2012complexity,
  title={A complexity-of-strategic-behavior comparison between Schulze's rule and ranked pairs},
  author={Parkes, David and Xia, Lirong},
  booktitle={Proceedings of the AAAI Conference on Artificial Intelligence},
  volume={26},
  pages={1429--1435},
  year={2012}
}

@phdthesis{lin2012solving,
  author = {Lin, Andrew Peter},
  title = {Solving Hard Problems in Election Systems},
  school = {Rochester Institute of Technology},
  year = {2012}
}

@article{casella2012competitive,
  title={Competitive equilibrium in markets for votes},
  author={Casella, Alessandra and Llorente-Saguer, Aniol and Palfrey, Thomas R},
  journal={Journal of Political Economy},
  volume={120},
  number={4},
  pages={593--658},
  year={2012},
  publisher={University of Chicago Press Chicago, IL}
}

@inproceedings{xia2012computing,
  title={Computing the margin of victory for various voting rules},
  author={Xia, Lirong},
  booktitle={Proceedings of the 13th ACM Conference on Electronic Commerce},
  pages={982--999},
  year={2012}
}

@article{kalai2013cooperation,
  title={Cooperation in strategic games revisited},
  author={Kalai, Adam and Kalai, Ehud},
  journal={The Quarterly Journal of Economics},
  volume={128},
  number={2},
  pages={917--966},
  year={2013},
  publisher={Oxford University Press}
}

@inproceedings{sodomka2013coco,
  title={Coco-q: Learning in stochastic games with side payments},
  author={Sodomka, Eric and Hilliard, Elizabeth and Littman, Michael and Greenwald, Amy},
  booktitle={International Conference on Machine Learning},
  pages={1471--1479},
  year={2013},
  organization={PMLR}
}

@inproceedings{turrini2013endogenous,
  title={Endogenous Boolean Games},
  author={Turrini, Paolo},
  booktitle={Proceedings of the Twenty-Third International Joint Conference on Artificial Intelligence},
  pages={390--396},
  year={2013}
}

@article{faliszewski2015complexity,
  title={Complexity of manipulation, bribery, and campaign management in Bucklin and fallback voting},
  author={Faliszewski, Piotr and Reisch, Yannick and Rothe, J{\"o}rg and Schend, Lena},
  journal={Autonomous Agents and Multi-Agent Systems},
  volume={29},
  number={6},
  pages={1091--1124},
  year={2015},
  publisher={Springer}
}

@incollection{baumeister2016preference,
  title={Preference aggregation by voting},
  author={Baumeister, Dorothea and Rothe, J{\"o}rg},
  booktitle={Economics and Computation: An Introduction to Algorithmic Game Theory, Computational Social Choice, and Fair Division},
  pages={197--325},
  year={2016},
  publisher={Springer}
}

@incollection{faliszewski2016control,
  author    = {Piotr Faliszewski and J{\"o}rg Rothe},
  title     = {Control and Bribery in Voting},
  booktitle = {Handbook of Computational Social Choice},
  editor    = {Felix Brandt and Vincent Conitzer and Ulle Endriss and J{\'e}r{\^o}me Lang and Ariel D. Procaccia},
  pages      = {146--168},
  year       = {2016},
  publisher  = {Cambridge University Press},
}

@article{casella2021trading,
  author = {Casella, Alessandra and Palfrey, Thomas R.},
  title = {Trading Votes for Votes: A Laboratory Study},
  journal = {Games and Economic Behavior},
  volume = {125},
  pages = {1--26},
  year = {2021},
}

@inproceedings{lowe2017multi,
  author    = {Ryan Lowe and Yi I Wu and Aviv Tamar and
               Jean Harb and Pieter Abbeel and Igor Mordatch},
  title     = {Multi-Agent Actor-Critic for Mixed Cooperative-Competitive Environments},
  booktitle = {Advances in Neural Information Processing Systems},
  year      = {2017}
}

@article{meir2017iterative,
  title={Iterative voting},
  author={Meir, Reshef},
  journal={Trends in Computational Social Choice},
  volume={4},
  pages={69--86},
  year={2017},
  publisher={AI Access}
}

@article{nath2019efficiency,
  title={Efficiency and budget balance in general quasi-linear domains},
  author={Nath, Swaprava and Sandholm, Tuomas},
  journal={Games and Economic Behavior},
  volume={113},
  pages={673--693},
  year={2019},
  publisher={Elsevier}
}

@article{grandi2019negotiable,
  title={Negotiable votes},
  author={Grandi, Umberto and Grossi, Davide and Turrini, Paolo},
  journal={Journal of Artificial Intelligence Research},
  volume={64},
  pages={895--929},
  year={2019}
}

@inproceedings{yang2020learning,
  author    = {Jiachen Yang and Ang Li and Mehrdad Farajtabar and
               Peter Sunehag and Edward Hughes and Hongyuan Zha},
  title     = {Learning to Incentivize Other Learning Agents},
  booktitle = {Advances in Neural Information Processing Systems},
  volume    = {33},
  pages     = {15208--15219},
  year      = {2020}
}

@article{casella2021does,
  title={Does vote trading improve welfare?},
  author={Casella, Alessandra and Mac{\'e}, Antonin},
  journal={Annual Review of Economics},
  volume={13},
  number={1},
  pages={57--86},
  year={2021},
  publisher={Annual Reviews}
}

@article{haupt2024formal,
  title={Formal contracts mitigate social dilemmas in multi-agent reinforcement learning},
  author={Haupt, Andreas and Christoffersen, Phillip and Damani, Mehul and Hadfield-Menell, Dylan},
  journal={Autonomous Agents and Multi-Agent Systems},
  volume={38},
  number={2},
  pages={51},
  year={2024},
  publisher={Springer}
}

@article{kolumbus2024paying,
  title={Paying to do better: Games with payments between learning agents},
  author={Kolumbus, Yoav and Halpern, Joe and Tardos, {\'E}va},
  journal={arXiv preprint arXiv:2405.20880},
  year={2024}
}

@article{amato2024introduction,
  title={An introduction to centralized training for decentralized execution in cooperative multi-agent reinforcement learning},
  author={Amato, Christopher},
  journal={arXiv preprint arXiv:2409.03052},
  year={2024}
}

@article{geffner2025maximizing,
  title={Maximizing Social Welfare with Side Payments},
  author={Geffner, Ivan and Oesterheld, Caspar and Conitzer, Vincent},
  journal={arXiv preprint arXiv:2508.07147},
  year={2025}
}

\clearpage
\appendix

\section*{Appendix}

\section{Variant Models of Strategic Behavior}
\label{apx:strategic_behavior_variants}

\subsection{Sticky Strategic Behavior}
\label{apx:sticky}

In the main text, we analyzed a model of strategic behavior in which all agents in a coalition $S$ could simultaneously revise their contracts and votes, while agents outside the coalition $\mathcal{N} \setminus S$ updated their votes optimally in response to the new contract scheme. That is, in any coalitional deviation $(P', \tau', S')$ from a state $(P, \tau, S)$, the contract scheme $\tau'$ is derived from utilities $u+\tau$, and agents in $S'$ evaluate IR-feasibility assuming that all other agents vote truthfully with respect to $u+\tau'$.

Here, we consider a variant we call \emph{sticky} strategic behavior, where nonparticipating agents are even more passive. Intuitively, sticky behavior is closer to the standard notion of coalitional deviations in cooperative game theory: only participating agents act, while all others keep both their votes and contracts fixed. Formally, given a state $(P, \tau, S)$, a coalition $S'$ revises their votes $P'_{S'}$ and contracts $\tau'_{S'}$, while the remaining agents retain their previous votes and contracts. Equivalently, the updated vote profile is
\[
P' = (P'_{S'}, P_{-S'}),
\]
so that nonparticipating agents do not respond to changes in their utilities induced by $\tau'$. This shows that, unlike in the main model where nonparticipants best-respond to new contracts, sticky behavior can preclude IR-SNE, illustrating the importance of modeling how nonparticipating agents react to coalition deviations.

\begin{prop}
IR-SNE may fail to exist under sticky strategic behavior, even for two agents, two alternatives, and the consensus rule $f$.
\end{prop}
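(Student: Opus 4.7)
The plan is to prove non-existence by constructing a minimal counterexample with $|\calN| = |\calA| = 2$ and then verifying that no IR-feasible state of the induced game is sticky-SNE. Concretely, I would take $\calA = \{1, 2\}$ with default $a^* = 2$ and base utilities $u_1 = (10, 0)$, $u_2 = (0, 1)$, so that $SW(1) = 10 > 1 = SW(2)$ but the truthful profile has agent~$2$ ranking~$2$ first, forcing $f(P^*) = a^* = 2$. By Proposition~\ref{prop:non_welfare_max_is_not_SNE}, the truthful state is already not an SNE under any AMR rule, and this argument is unaffected by the sticky variant because the grand coalition's deviation toward the welfare-maximizer involves no nonparticipants whose behavior depends on the variant.

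I would then enumerate the remaining IR-feasible states. States with $f(P) = 2$ force $\tau_i(2) \ge 0$ for both agents by IR from $(P^*, \emptyset, \emptyset)$, which together with budget balance collapses to the truthful state. States with $f(P) = 1$ form a one-parameter family: IR demands $10 + \tau_1(1) \ge 0$ and $\tau_2(1) \ge 1$, and combined with $\tau_1(1) + \tau_2(1) = 0$ this yields $\tau_1(1) \in [-10, -1]$, $\tau_2(1) = -\tau_1(1)$.

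The crux is to show that every outcome-$1$ state is destabilized under sticky behavior via the singleton coalition $S' = \{1\}$. I would have agent~$1$ renege on their outgoing contract, setting $c'_{1 \to 2}(1) = 0$ so that the induced transfers become $\tau'_1(1) = \tau'_2(1) = 0$. Under sticky behavior, agent~$2$'s ballot is retained from $(P, \tau, S)$, where it must have ranked $1$ first (otherwise consensus could not have selected $1$), and agent~$1$'s updated preferences still rank $1$ above $2$ since $u_1(1) = 10 > 0 = u_1(2)$. Therefore $f(P') = 1$, and agent~$1$'s payoff jumps from $10 + \tau_1(1) \le 9$ to $10$, which is a strict IR gain.

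The subtle step I would most carefully justify is that agent~$2$ is correctly treated as a nonparticipant under the deviation's accounting. Agent~$2$'s net transfer $\tau_2(1)$ decreases from a strictly positive value to $0$, which triggers neither clause of Lemma~\ref{lem:observable_participation} (the first requires $\tau_2(1) \le 0$ and the second requires $\tau'_2(1) < 0$), so agent~$2$ is not forced into $S'$; budget balance is preserved because only agent~$1$'s outgoing contracts are modified. Combining the two cases, the unique outcome-$2$ IR-feasible state fails SNE by Proposition~\ref{prop:non_welfare_max_is_not_SNE}, and every outcome-$1$ IR-feasible state is destabilized by the sticky renege, so no IR-SNE exists.
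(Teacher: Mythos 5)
Your proof is correct and takes essentially the same approach as the paper: the same minimal $2\times 2$ consensus instance (one agent strongly preferring the non-default alternative) and the same key move of a singleton renege on outgoing contracts at the winning alternative, exploiting that the other agent's ballot is retained under stickiness. The only difference is organizational — you exhaustively classify IR-feasible states and destabilize each, whereas the paper exhibits a cycle of unilateral deviations — but the underlying construction and insight are identical.
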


\begin{proof}
Consider $n=2$ agents, $m=2$ alternatives, and $f$ as the consensus rule with alternative $2$ as the default. Let utilities be
\[
u = \begin{pmatrix}
3 & 0 \\
0 & 1
\end{pmatrix}.
\]
Contracts $\tau$ can be parameterized by $\alpha, \beta \ge 0$:
\[
\tau = \begin{pmatrix}
-\alpha & \alpha \\
\beta & -\beta
\end{pmatrix},
\quad
\text{so that } u+\tau = \begin{pmatrix}
3-\alpha & \alpha \\
\beta & 1-\beta
\end{pmatrix}.
\]
A state $(P, \tau, S)$ is IR-feasible from the truthful state $(P^*, \emptyset, \emptyset)$, with $P^* = (2 \succ 1, 1 \succ 2)$ and $f(P^*) = 2$, if and only if
\[
\alpha \ge 1-\beta.
\]

Fix such $\alpha$ and $\beta$ and consider $S = \{1,2\}$ with votes
\[
P = (1 \succ 2, 1 \succ 2),
\]
so that $f(P) = 1$. This is an IR improvement over the truthful state since $\alpha \geq 1-\beta$ implies $3 - \alpha \geq \beta$.

Now consider a singleton coalition $S' = \{1\}$ under sticky behavior. Let agent 1 revise their contract to
\[
\tau'_{S'} = (0,0),
\quad \text{so that } u + \tau' = \begin{pmatrix} 3 & 0 \\ \beta & 1-\beta \end{pmatrix}.
\]
By the definition of sticky behavior, agent 2 keeps their previous vote, so $P' = P$ and $f(P') = 1$, strictly preferred by agent 1 over $(P, \tau, S)$.

Next, agent 2 can similarly deviate with $S'' = \{2\}$, $P'' = (1 \succ 2, 2 \succ 1)$, and $\tau'' = \emptyset$, which is strictly preferred by agent 2 over $(P', \tau', S')$ and IR-feasible. Since each agent can unilaterally deviate in turn, no IR-SNE exists. This illustrates that even in this minimal setting, sticky strategic behavior can preclude the existence of stable outcomes.
\end{proof}


\subsection{Anonymous Recipient Model}

Throughout this paper, we assume that agents’ contracts $(c_{i \rightarrow j}(a))_{i \in \calS, j \in \calN, a \in \calA}$ are unilaterally offered, pairwise payments between agents, conditioned on the alternative $a$ selected by the social choice rule $f$. We further assume that payments can be netted along transfer cycles. For example, if
\begin{itemize}
    \item $c_{i \rightarrow j}(a) = 2$,
    \item $c_{j \rightarrow k}(a) = 3$, and
    \item $c_{k \rightarrow i}(a) = 4$,
\end{itemize}
then these transfers can be reduced to
\begin{itemize}
    \item $c_{i \rightarrow j}(a) = 0$,
    \item $c_{j \rightarrow k}(a) = 1$, and
    \item $c_{k \rightarrow i}(a) = 2$,
\end{itemize}
respectively. In this reduced form, the induced net transfers satisfy $\tau_i(a) = 2$, $\tau_j(a) = -1$, and $\tau_k(a) = -1$. This representation allows us to reason about agents’ utilities using only the net transfer scheme $\tau$, and is therefore convenient for much of the analysis (e.g., Lemma~\ref{lem:observable_participation}).

However, care is required when defining coalitions solely in terms of the reduced-form transfers $\tau$, rather than the underlying bilateral contracts $c$. While $\tau$ is sufficient to evaluate utilities, it does not record which agents’ contractual actions change when a deviation occurs.

Recall that a coalition $S \subseteq \calN$ that changes the outcome of $f$ from $a$ to $b$ is individually rational if 
\[
u_i(b) + \tau_i(b) \geq u_i(a), \quad \forall i \in \calS,
\]
with strict inequality for at least one agent. Importantly, agents whose net transfers $\tau_i$ are unchanged may still need to belong to $S$ if their underlying contracts $c$ are reallocated to implement the deviation. Excluding such agents would incorrectly treat their contractual commitments as exogenously adjustable. The distinction between $c$ and $\tau$ is therefore essential for defining feasible coalitional deviations.

If contract formation does not specify recipient identities, so that agents can only commit to paying or receiving amounts without targeting particular agents, then agents whose role is limited to redistributing previously committed resources are not identified as participants in $S$. This corresponds to a centralized-intermediary interpretation in which only net transfers matter, and coalition membership is inferred solely from changes in $\tau$. As we show in Proposition~\ref{prop:non_existence_for_anonymous}, under this anonymous-recipient interpretation, IR-SNE may fail to exist. Our definition of coalition membership thus reflects contractual control rather than payoff changes alone.

\begin{prop}
IR-SNE may not exist when recipients are anonymous, even with three agents, two alternatives, and $f$ is the consensus rule.
\label{prop:non_existence_for_anonymous}
\end{prop}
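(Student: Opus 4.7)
\begin{proofsketch}
The plan is to exhibit a specific three-agent, two-alternative instance under the consensus rule in which no state $(P,\tau,S)$ is an IR--SNE under the anonymous-recipient coalition definition. By Proposition~\ref{prop:non_welfare_max_is_not_SNE}, any candidate IR--SNE must select the welfare-maximizing alternative $b$, and under consensus this forces full coverage of $b$ in $u+\tau$, together with IR-feasibility from the truthful state. First, I would fix utilities so that the truthful winner coincides with the default $a^*$ and is not welfare-maximizing; this restricts attention to a finite-dimensional family of candidate states parametrized by transfer schemes $\tau$ satisfying full coverage and IR-feasibility.

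For each candidate scheme, the goal is to construct an explicit coalitional deviation that is IR-feasible under the anonymous-recipient definition but blocked under the non-anonymous one. The key mechanism is that under anonymity, coalition membership $S'$ is inferred \emph{solely} from changes in $\tau$ and votes, so an agent whose underlying bilateral contracts $c$ are reallocated to implement the deviation---yet whose net transfer $\tau_i$ and vote remain unchanged---is not counted in $S'$. This ``pass-through'' agent is therefore not subject to an IR constraint under anonymity, even when the induced outcome change strictly lowers their base utility. Under the non-anonymous model, the same agent would belong to $S'$ by virtue of their contractual modification, and their IR constraint would block the deviation; this is precisely why the analogous non-anonymous setting admits an IR--SNE by Theorem~\ref{thm:IR_SNE_characterization}.

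The hard part will be engineering utilities so that every candidate full-coverage transfer scheme for $b$ simultaneously admits such a pass-through deviation: different schemes support different bilateral decompositions, and each must admit its own exploitable pass-through. I anticipate the subtlest step to be verifying budget balance within the narrower anonymous coalition $S'$, since under anonymity the coalition cannot reallocate utility to or from agents outside $S'$, so the strict improvement for at least one active member must be financed internally by the others. A complementary obstacle is ensuring that the deviation's vote profile actually triggers the consensus outcome switch without requiring additional agents beyond the pass-through to vote strategically; otherwise they would be added to $S'$ and reinstate the blocking IR constraint that the anonymous framework was designed to bypass.
\end{proofsketch}
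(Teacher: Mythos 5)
Your overall strategy matches the paper's: both rest on a concrete three-agent, two-alternative instance under consensus in which the anonymous-recipient convention lets a coalition exploit a ``pass-through'' agent whose bilateral contracts are redirected while their net transfer and vote stay fixed, so that the agent who would block the deviation under the contract-aware model is silently excluded from $S'$. The paper's proof instantiates this with $u = \bigl(\begin{smallmatrix}1 & 1 & 3+\epsilon\\ 1 & 2 & 1\end{smallmatrix}\bigr)$ and exhibits four explicit states forming an improvement cycle (alternating winners $1,2,1,2$), where each transition redirects one agent's pre-committed payment at an alternative from one recipient to another without changing the donor's net transfer. Your second paragraph captures exactly this mechanism, including why the non-anonymous model blocks the same deviation (the donor's contractual modification would force them into $S'$, where their strict loss violates IR).

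However, there are two genuine problems. First, the proposal never produces the instance, the candidate transfer schemes, or the deviations; identifying the mechanism is not the proof, and the entire content of the paper's argument is the explicit four-state cycle. (Your instinct that one must handle \emph{every} candidate full-coverage state is, if anything, logically more demanding than what the paper does---the paper only exhibits a cycle among four states and asserts this suffices---but you still execute neither version.) Second, and more seriously, your final paragraph rests on a misconception that would derail the construction: you claim that under anonymity ``the coalition cannot reallocate utility to or from agents outside $S'$, so the strict improvement \ldots must be financed internally.'' The opposite is true, and it is the whole point of the counterexample. Budget balance is global, not within-coalition, and under the anonymous-recipient convention a singleton coalition finances its gain precisely by redirecting a non-member's anonymously committed payment away from another non-member: in the paper's cycle, agent $2$ alone deviates, capturing the unit that agent $1$ (net transfer unchanged, hence excluded) had been paying to agent $3$ (a receiver whose incoming transfer merely drops to zero, hence also excluded by Lemma~\ref{lem:observable_participation}). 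If you insisted on internal financing, no singleton coalition could ever profit from pure reallocation and the cycle could not be built.
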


\begin{proof}
Consider $n=3$ agents and $m=2$ alternatives, where $f$ selects alternative $2$ by default. Let
\[
u = \begin{pmatrix}
1 & 1 & 3+\epsilon \\
1 & 2 & 1
\end{pmatrix}.
\]
We construct four states, each specified by a net transfer scheme $\tau$ and the resulting utilities $u+\tau$:
\begin{itemize}
\item \textbf{State 1:} \[
\tau = \begin{pmatrix}
0 & 1 & -1 \\
-1 & 0 & 1
\end{pmatrix} \implies u+\tau = \begin{pmatrix}
1 & 2 & 2+\epsilon \\
0 & 2 & 2
\end{pmatrix}
\]
\item \textbf{State 2:} \[
\tau^1 = \begin{pmatrix}
0 & 1 & -1 \\
-1 & 1 & 0
\end{pmatrix} \implies u+\tau^1 = \begin{pmatrix}
1 & 2 & 2+\epsilon \\
0 & 3 & 1
\end{pmatrix}
\]
\item \textbf{State 3:} \[
\tau^2 = \begin{pmatrix}
0 & 2+\epsilon & -2-\epsilon \\
-1 & 1 & 0
\end{pmatrix} \implies u+\tau^2 = \begin{pmatrix}
1 & 3+\epsilon & 1 \\
0 & 3 & 1
\end{pmatrix}
\]
\item \textbf{State 4:} \[
\tau^3 = \begin{pmatrix}
0 & 2+\epsilon & -2-\epsilon \\
-1 & 0 & 1
\end{pmatrix} \implies u+\tau^3 = \begin{pmatrix}
1 & 3+\epsilon & 1 \\
0 & 2 & 2
\end{pmatrix}
\]
\end{itemize}
Starting from the truthful state $(P^*, \emptyset, \emptyset)$ with $f(P^*)=2$, State 1 is an individually rational deviation by the coalition $S=\calN$: agent 1 weakly benefits, agent 3 strictly improves, and agent 2 switches their vote to $(1 \succ 2)$, thus changing the outcome of $f$ to $1$.

From State 1, State 2 is individually rationally reachable without requiring any agent to be included in the active coalition. The transition involves only a redistribution of previously committed transfers, and no agent’s net utility decreases. Under the assumption of anonymous recipients, agents whose role is limited to such redistribution are not required to participate in the deviating coalition, making this transition feasible. The outcome of $f$ changes to $2$ as agent $2$ switches their vote back to $(2 \succ 1)$.

The remaining transitions among States 2, 3, 4, and back to 1 follow by analogous individually rational deviations, yielding an improvement cycle over outcomes $2,1,2$,  and $1$, respectively.

Although many other individually rational transfer schemes exist, it suffices to exhibit a finite set of states that block one another via individually rational deviations. The four states above form such a cycle under anonymous recipients, implying that no IR-SNE exists.
\end{proof}

\newpage

\section{Complete Proofs}

\subsection{Full Proof of Lemmas}
\label{apx:proof_of_lemmas}

\setcounter{lem}{1}
\begin{lem}
Given base utilities $u$ and consensus rule $f$, consider a state $(P,\tau,S)$ that is IR-feasible from the truthful state $(P^*,\emptyset,\emptyset)$.
Suppose that $f(P^*) = a^* \notin \arg\max_{a \in \calA} SW(a)$, that $f(P)=b \in \arg\max_{a \in \calA} SW(a)$, and that $b$ has full coverage under $u+\tau$. Then there is no state $(P', \tau', S')$ with $f(P') = a^*$ that is IR-feasible from $(P, \tau, S)$.
\end{lem}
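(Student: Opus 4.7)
The plan is to argue by contradiction. Suppose that from $(P,\tau,S)$ there exists an IR-feasible deviation $(P',\tau',S')$ with $f(P') = a^*$, and derive an impossibility by combining the IR inequalities for agents in $S'$, the full-coverage inequality for $b$ under $u+\tau$, budget balance of $\tau$ and $\tau'$, and Lemma~\ref{lem:observable_participation}.

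First I would extract the basic inequality at $a^*$: by IR of the deviation, each $i \in S'$ satisfies $u_i(a^*) + \tau'_i(a^*) \ge u_i(b) + \tau_i(b)$, with strict inequality for at least one $i$. Full coverage of $b$ gives $u_i(b) + \tau_i(b) \ge u_i(a^*) + \tau_i(a^*)$ for every agent, so chaining the two inequalities for $i \in S'$ yields $\tau'_i(a^*) \ge \tau_i(a^*)$, with a strict inequality somewhere. Summing and applying budget balance of both $\tau$ and $\tau'$ forces $\sum_{i \notin S'} \tau'_i(a^*) < \sum_{i \notin S'} \tau_i(a^*)$, so the coalition's strict gain at $a^*$ must be financed by a strict net decrease in transfers to outsiders.

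Next I would use Lemma~\ref{lem:observable_participation} twice to bound that decrease. Applied from $(P^*,\emptyset,\emptyset)$ to $(P,\tau,S)$, the lemma forces $\tau_i(a) \ge 0$ for every $i \notin S$ and every $a$. Applied from $(P,\tau,S)$ to $(P',\tau',S')$, it forces $\tau'_i(a^*) \ge \min\{0,\tau_i(a^*)\}$ for every $i \notin S'$. Thus the only outsiders who can absorb any of the required decrease are previous receivers at $a^*$ (with $\tau_i(a^*) > 0$), and their transfers cannot fall strictly below zero without those agents themselves joining $S'$.

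The hard part will be closing the loop using the consensus structure of $f$. Since $f(P') = a^*$, either $a^*$ is unanimously top-ranked in $P'$ or no alternative is unanimously top. The first case requires every outsider, voting truthfully with respect to $u+\tau'$, to rank $a^*$ at the top, which couples $\tau'_j(b)$ to $\tau'_j(a^*)$ for each outsider $j$; the second case imposes an analogous coupling via the disunity constraint. I expect that combining either coupling with the aggregate inequality above, together with the $b$-full-coverage bound $\tau_j(b) \ge u_j(a^*) - u_j(b) + \tau_j(a^*)$, forces some outsider $j$ into a configuration where $\tau'_j(\cdot)$ must drop strictly below zero at an alternative where $\tau_j(\cdot) \ge 0$. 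Lemma~\ref{lem:observable_participation} then places $j \in S'$, contradicting $j \notin S'$ and completing the proof.
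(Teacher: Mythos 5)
Your first two steps track the paper's proof: the chain $U'_i(a^*) \ge U_i(b) \ge U_i(a^*)$ for $i \in S'$, budget balance at $a^*$, and Lemma~\ref{lem:observable_participation} to restrict which outsiders can absorb the decrease. But the argument is not closed, and the direction you propose for closing it is a dead end. Your aggregate inequality $\sum_{i \notin S'} \tau'_i(a^*) < \sum_{i \notin S'} \tau_i(a^*)$ is simply not a contradiction: as you yourself observe, outsiders with $\tau_i(a^*) > 0$ can legitimately have their incoming transfers at $a^*$ reduced toward zero without joining $S'$, so there is genuine room to finance the coalition. The consensus voting structure will not rescue you: under consensus, $a^*$ wins by default as soon as unanimity for every other alternative is broken, which a single coalition member's vote already achieves, so $f(P') = a^*$ imposes no constraint on any outsider's vote and hence no ``coupling'' between $\tau'_j(b)$ and $\tau'_j(a^*)$. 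The paper's proof makes no use of how $a^*$ comes to win.

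The missing ingredient is the hypothesis you never invoke: $(P,\tau,S)$ is IR-feasible from the truthful state, whose winner is $a^*$. Together with Lemma~\ref{lem:observable_participation} (non-participants of $S$ have nonnegative transfers everywhere) and full coverage, this yields $U_i(b) \ge u_i(a^*)$ for every $i \in \calN$. The correct comparison at $a^*$ is then against base utilities, not against $\tau$: for $i \in S'$ one has $U'_i(a^*) \ge U_i(b) \ge u_i(a^*)$, i.e., $\tau'_i(a^*) \ge 0$ with strict inequality for the strict beneficiary, so $\sum_{i \in S'} \tau'_i(a^*) > 0$; meanwhile Lemma~\ref{lem:observable_participation} is used to argue that any outsider falling below base utility at $a^*$ would be forced into $S'$, so $\sum_{i \notin S'} \tau'_i(a^*) \ge 0$, contradicting budget balance $\sum_{i \in \calN} \tau'_i(a^*) = 0$. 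Equivalently, the paper sums these bounds to obtain $\sum_{i \in S'} u_i(a^*) > \sum_{i \in S'} U_i(b)$, contradicting $U_i(b) \ge u_i(a^*)$. Note that without the IR-feasibility-from-truthful hypothesis the statement is false (a full-coverage state can leave some agent with $U_i(b) < u_i(a^*)$ who then profitably reverts to $a^*$ by withdrawing their own donation), so any proof that omits it, as yours does, cannot succeed.
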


\begin{proof}
Suppose for contradiction that there exists $(P', \tau', S')$ with $f(P') = a^*$ that is IR-feasible from $(P, \tau, S)$. We make use of the following three facts to complete this proof. 

First, by definition of IR,
\begin{equation}
U'_i(a^*) \geq U_i(b), \quad \forall i \in S',
\label{eq:no_Astar_1}
\end{equation}
where one agent inequality is strictly greater. 

Second, by definitions of IR and full coverage, and following Lemma \ref{lem:observable_participation},
\begin{equation*}
U_i(b) \geq u_i(a^*), \quad \forall i \in \calN,
\end{equation*}
Specifically, every $i \in S$ must have $U_i(b) \geq u_i(a^*)$, by definition of IR. Similarly, for every $i \in \calN \backslash S$, $U_i(b) \geq U_i(a^*) \geq u_i(a^*)$ since $\tau_i(a^*) \geq 0$, by Lemma \ref{lem:observable_participation}. This implies,
\begin{equation}
    \tau'_i(a^*) \geq 0, \quad \forall i \in S'.
    \label{eq:no_Astar_3b}
\end{equation}

Third, by definition of full coverage, 
\begin{equation}
    U_i(b) \geq U_i(a^*), \quad \forall i \in S'.
    \label{eq:no_Astar_4b}
\end{equation}
Hence, combining Equations (\ref{eq:no_Astar_1}) and (\ref{eq:no_Astar_4b}), we can denote by
\[
T = \sum_{i \in S'} \left( U'_i(a^*) - U_i(a^*) \right) > 0
\]
the utility agents in $S'$ gain for $a^*$ due to $\tau'$ above $\tau$. It follows that
\[
T > \sum_{\substack{i \in S' \\ \tau_i(a^*) < 0}} | \tau_i(a^*) |
\]
since even if every $i \in S'$ with $\tau_i(a^*) < 0$ fully recouped their donation by setting $\tau'_i(a^*) = 0$, this would only recover $u_i(a^*) \leq U_i(b)$ for each such agent, which is insufficient to satisfy the strict inequality in Equation (\ref{eq:no_Astar_1}). 

Let
\[
R = \sum_{\substack{i \in \calN \backslash S' \\ \tau_i(a^*)>0}} \tau_i(a^*)
\]
denote the gross amount of utility that agents in $\calN \backslash S'$ received for $a^*$ due to $\tau$. By Lemma \ref{lem:observable_participation}, agents in $\calN \backslash S'$ can contribute at most $R$ toward $T$ without violating the definition of $S'$: agents with $\tau_k(a^*) < 0$ cannot decrease $\tau'_k(a^*)$ further (case 1 of Lemma \ref{lem:observable_participation}), while agents with $\tau_k(a^*) \geq 0$ can reduce their transfers to at most $0$ before violating case 2.

There are now two cases. If $T > R$, then the required funding strictly exceeds what agents in $\calN \backslash S'$ can contribute without violating Lemma \ref{lem:observable_participation}, forming a contradiction.

If $T \leq R$, then agents in $\calN \backslash S'$ with $\tau_k(a^*) > 0$ 
have sufficient existing transfers to fund $T$ in principle. However, for 
any such funding to reach agents in $S'$, agents in $\calN \setminus S'$ 
must modify their outgoing contracts. By Definition 
\ref{dfn:coalitional_manipulation}, non-participating agents cannot modify 
their outgoing contracts, so any such redirection would force $k \in S'$, 
a contradiction.

In both cases, we obtain a contradiction. Therefore, there is no state $(P', \tau', S')$ with $f(P') = a^*$ that is IR-feasible from $(P, \tau, S)$.
\end{proof}

\setcounter{lem}{2}
\begin{lem}
Given base utilities $u$ and consensus rule $f$, consider a state $(P,\tau,S)$ that is IR-feasible from the truthful state $(P^*,\emptyset,\emptyset)$.
Suppose that $f(P^*) = a^* \notin \arg\max_{a \in \calA} SW(a)$, that $f(P)=b \in \arg\max_{a \in \calA} SW(a)$, and that $b$ has full coverage under $u+\tau$. 
There exists an IR-feasible coalitional deviation $(P', \tau', S')$ with $f(P') \in \argmax_{a \in \calA} SW(a)$ if and only if $\exists i \in \calN$ with $\tau_i(b) > 0$ so that at least one of the following hold:
\begin{enumerate}
    \item[P1.] \emph{(Strict slack at $b$.)} \\ $U_i(b) > \max_{a \in \mathcal{A} \setminus \{b\}} U_i(a)$;
    \item[P2.] \emph{(Loose donors at every NBA.)}\\ $\forall x \in NBA_i$, both $\tau_i(x) > 0$ and $\exists j \in \calN$ so that $\tau_j(x)<0$ and $U_j(x) < U_j(b)$;
    \item[P3.] \emph{(Slack can be extracted unless blocked by binding donors.)} \\ Both:
    \begin{itemize}
        \item $\forall x \in NBA_i$, $\tau_i(x) > 0$;
        \item $\exists j \in \calN$ so that $\forall x \in \calB_i$, $\tau_j(x) < 0$ and $U_j(x) = U_j(b)$.
    \end{itemize}
\end{enumerate}
\end{lem}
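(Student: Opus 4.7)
I will prove both directions of the equivalence. Sufficiency is constructive: given a receiver $i$ satisfying one of the three conditions, I will build a concrete small-$\epsilon$ coalitional deviation that preserves $b$ as the winner, preserves full coverage, and strictly benefits a coalition that excludes $i$. Necessity is structural: given any IR-feasible deviation to a welfare-maximizing outcome, I will extract an agent and argue that the pattern of the deviation forces one of P1--P3 to hold at that agent. Throughout, Lemma~\ref{lem:observable_participation} and the full-coverage pinning will be used to control which agents are forced into the deviating coalition and which remain passive.

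\textbf{Sufficiency.} Fix $i$ with $\tau_i(b) > 0$ satisfying one of the three conditions and attempt to transfer $\epsilon > 0$ of $i$'s utility at $b$ to a coalition member. Under P1, a single existing donor $d$ at $b$ (which exists by budget balance since $\tau_i(b) > 0$) receives the $\epsilon$; the strict slack $U_i(b) > \max_{a \neq b} U_i(a)$ guarantees that $b$ remains top-ranked for $i$ once $\tau_i(b)$ is reduced by a sufficiently small $\epsilon$, so $i$ is not forced into the coalition. Under P2, the tie between $b$ and each $x \in NBA_i$ forces me to simultaneously reduce $\tau_i(x)$ by $\epsilon$, with a loose donor $j_x$ at $x$ absorbing this adjustment; since $b$ is still the winner the realized utility of $j_x$ is unchanged, and $j_x$'s full coverage of $b$ is preserved because their slack $U_{j_x}(x) < U_{j_x}(b)$ absorbs the $\epsilon$ increase at $x$. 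Under P3, I use the single tight donor $j$ from the hypothesis to absorb the adjustment at every binding alternative $x \in \calB_i$ and compensate $j$ at $b$ with exactly the $\epsilon$ of slack released by $i$, preserving the tie $U_j(b) = U_j(x)$ across $\calB_i$; non-binding NBAs are handled as in P2. In each case I will check IR for the chosen coalition, budget balance at every alternative, and full coverage of $b$ under $u + \tau'$ to confirm the construction.

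\textbf{Necessity and main obstacle.} For the converse, assume an IR-feasible deviation $(P', \tau', S')$ exists with $f(P') \in \argmax_{a \in \calA} SW(a)$. Any strict IR gain by a member of $S'$ must be funded at the winning alternative, and Lemma~\ref{lem:observable_participation} forces the funding to come from a non-participating receiver $k$ with $\tau_k(b) > 0$ whose transfer at $b$ drops. I will then case-split on how full coverage for $k$ is maintained under $\tau'$: either $k$ has realized slack at $b$ (yielding P1), or slack must leak through $k$'s NBAs with every NBA supplying a loose donor (yielding P2), or binding NBAs are present but jointly absorbable by a single agent (yielding P3). The hard case is P3 on both sides. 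On the sufficiency side I must verify that a single $j$ simultaneously supports adjustments across all binding alternatives without violating their own full coverage or dragging new agents into $S'$; on the necessity side I must argue that the absence of such a $j$ forces the deviation to recruit an additional tight donor whose full coverage or IR would then be violated, contradicting the assumption. This cross-alternative bookkeeping is the delicate step, as it requires ruling out apparently feasible redistribution schemes that quietly introduce new participants via Lemma~\ref{lem:observable_participation}.
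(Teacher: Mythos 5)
Your proposal follows essentially the same route as the paper's proof: sufficiency via the same three explicit small-$\epsilon$ redistributions that exclude the receiver $i$ from the deviating coalition (donor at $b$ under P1, loose donors at every NBA under P2, a single common tight donor across $\calB_i$ under P3), and necessity via identifying a non-participating funded receiver with $\tau_k(b)>0$ and case-splitting on how $k$'s full coverage survives, with Lemma~\ref{lem:observable_participation} and the full-coverage pinning across welfare maximizers doing the same work. The only cosmetic difference is that the paper phrases necessity by contraposition (assuming none of P1--P3 hold and deriving the two exhaustive failure cases) rather than directly, which does not change the substance.
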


\begin{proof}
There are two directions to this proof: sufficiency and necessity. In Step 1, below, we prove the existence of an IR-feasible state $(P', \tau', S')$ with $f(P') \in \argmax_{a \in \calA} SW(a)$ for each of the three conditions of the lemma, (P1), (P2), and (P3). In Step 2, we subsequently prove that if none of (P1), (P2), or (P3) hold for any $i \in \calN$ with $\tau_i(b) > 0$, then $(P', \tau', S')$ cannot exist. 

\textbf{Step 1 ($\impliedby$):}
We first go through each of the three cases to show that if $\exists i \in \calN$ so that $\tau_i(b) > 0$ that satisfies either  (P1), (P2), or (P3), then there exists an IR-feasible coalitional deviation $(P', \tau', S')$ with $f(P') = b$. \\

\textbf{Case (P1).}
Suppose $\exists i \in \calN$ with $\tau_i(b) > 0$ so that
\[
U_i(b) = \max_{a \in \mathcal{A} \setminus \{b\}} U_i(a) + \beta
\]
for some $\beta > 0$. Since $\tau_i(b) > 0$ and $i$ experiences realized slack $\beta$, there exists at least one donor agent $j$ with $\tau_j(b) < 0$ who can shift part of their transfer to $i$ without violating IR. Consider the following modification of transfers $\tau'$ where:
\[
\tau'_k(a) = \begin{cases}
    \tau_i(b)-\frac{\beta}{2}, & k=i, a=b \\
    \tau_j(b)+\frac{\beta}{2}, & k=j, a=b \\
    \tau_k(a), & \text{otherwise}.
\end{cases}
\]
The active deviating coalition is $S' = \{j\}$. Only agent $j$ changes their transfer; all other agents, including $i$, remain bound by their existing contracts and vote truthfully according to the updated utilities. By construction, $b$ continues to have full coverage under $u+\tau'$, so the consensus rule selects $f(P') = b$.

Under the modified transfers $\tau'$, agent $j$'s utility strictly increases by $\frac{\beta}{2}$, while no agent in $S'$ experiences a loss. Although agent $i$'s utility decreases relative to $\tau$, $i \notin S'$ and does not participate in the deviation, so their loss does not affect the IR condition. Therefore, $(P', \tau', S')$ constitutes an IR-feasible deviation from $(P, \tau, S)$. \\

\textbf{Case (P2).}
Suppose $\exists i \in \calN$ with $\tau_i(b) > 0$ and $\tau_i(x)>0$, $\forall x \in NBA_i$. Suppose further that $\forall x \in NBA_i$, $\exists j_x \in \calN$ so that $\tau_{j_x}(x) < 0$ and $U_{j_x}(x) < U_{j_x}(b)$. Define $\epsilon > 0$ that is small enough so that:
\begin{itemize}
    \item $\epsilon <  U_i(b) - \max_{a \in \calA \setminus (NBA_i \cup \{b\})} U_i(a)$, the realized slack between $b$ and $NBA_i$, and the subsequent alternatives;
    \item $\epsilon < \min_{a \in NBA_i \cup \{b\}}\tau_i(a)$, the minimum amount $i$ receives over $b$ and each $NBA_i$;
    \item $\epsilon < \min_{x \in NBA_i}|\tau_{j_x}(x)|$, the minimum amount any $j_x$ donates to $i$ over $x \in NBA_i$.
\end{itemize}
Intuitively, these conditions on $\epsilon$ ensure that each donor $j_x$ continues to satisfy IR at $b$ after withdrawing $\epsilon$ of their offers from $i$ over each $x \in NBA_i$.
Let $j_b \in \calN$ be some agent donating to $i$ over $b$ so that $\tau_{j_b}(b) < 0$. Consider the following modification of transfers $\tau'$ where:
\[
\tau'_k(a) = \begin{cases}
    \tau_i(a)-\epsilon, & k=i, a\in NBA_i \cup \{b\} \\
    \tau_{j_b}(b)+\epsilon, & k={j_b}, a=b \\
    \tau_{j_a}(a)+\epsilon, & a \in NBA_i, k = j_a \\
    \tau_k(a), & \text{otherwise}.
\end{cases}
\] 
By construction, $b$ continues to have full coverage under $u+\tau'$, so the consensus rule selects $f(P') = b$. The active deviating coalition is $S' = \{j_b\} \cup {j_x}_{x \in NBA_i}$. The collection of $j_x$ agents withdraw their transfers away from $i$ for each alternative $x \in NBA_i$, while $j_b$ benefits by receiving $\epsilon$ more utility that $i$ surrendered. Although agent $i$'s utility decreases relative to $\tau$, $i \notin S'$ and does not participate in the deviation, so their loss does not affect the IR condition. All other agents, including $i$, remain bound by their existing contracts and vote truthfully according to the updated utilities.  Therefore, $(P', \tau', S')$ constitutes an IR-deviation from $(P, \tau, S)$. \\

\textbf{Case (P3).}
Suppose $\exists i \in \calN$ with $\tau_i(b) > 0$ and $\tau_i(x)>0$, $\forall x \in NBA_i$. Suppose further that $\exists \calB_i \subseteq NBA_i$ so that $\forall x \in \calB_i$, for every $j \in \mathcal{N}$ so that $\tau_j(x) < 0$, $U_j(x) = U_j(b)$. In particular, we assume that $\exists j^* \in \mathcal{N}$ so that $\forall x \in \calB_i$, $\tau_{j^*}(x) < 0$ and $U_{j^*}(x) = U_{j^*}(b)$. Thus, $j^*$ is the agent specified in condition (P3) of the lemma. This implies that $\forall x \in NBA_i \setminus \calB_i$, $\exists j_x \in \calN$ so that $\tau_{j_x}(x) < 0$ and $U_{j_x}(x) < U_{j_x}(b)$. 

Define $\epsilon > 0$ that is small enough so that:
\begin{itemize}
    \item $\epsilon <  U_i(b) - \max_{a \in \calA \setminus (NBA_i \cup \{b\})} U_i(a)$, the realized slack between $b$ and $NBA_i$, and the subsequent alternatives;
    \item $\epsilon < \min_{a \in NBA_i \cup \{b\}}\tau_i(a)$, the minimum amount $i$ receives over $b$ and each $NBA_i$;
    \item $\epsilon < \min_{x \in NBA_i \setminus \calB_i}|\tau_{j_x}(x)|$ and $\epsilon < \min_{x \in \calB_i}|\tau_{j^*}(x)|$, the minimum amounts any $j_x$ or $j^*$ donates to $i$ over $x \in NBA_i$.
\end{itemize}
Intuitively, these conditions on $\epsilon$ ensure that each donor $j_x$ continues to satisfy IR at $b$ after withdrawing $\epsilon$ of their offers from $i$ over each $x \in NBA_i$.
Consider the following modification of transfers $\tau'$ where:
\[
\tau'_k(a) = \begin{cases}
    \tau_i(a)-\epsilon, & k=i, a\in NBA_i \cup \{b\} \\
    \tau_{j_a}(a)+\epsilon, & a \in NBA_i \setminus \calB_i, k = j_a \\
    \tau_{j^*}(a) + \epsilon, & a \in \calB_i \cup \{b\}, k = j^* \\
    \tau_k(a), & \text{otherwise}.
\end{cases}
\] 
By construction, $b$ continues to have full coverage under $u+\tau'$, so the consensus rule selects $f(P') = b$. The active deviating coalition is $S' = \{j^*\} \cup \{j_x\}_{x \in NBA_i \setminus \calB_i}$. The collection of $j_x$ agents withdraw their transfers away from $i$ for each alternative $x \in NBA_i \setminus \calB_i$ Agent $j^*$ both withdraws their transfer from $i$ for each alternative in $x \in \calB_i$ and benefits by receiving $\epsilon$ more utility that $i$ surrendered. This differs from Case (P2) since $j^*$ is binding at some alternatives $x$, so they must be the unique agent to take $i$'s utility over $b$. Although agent $i$'s utility decreases relative to $\tau$, $i \notin S'$ and does not participate in the deviation, so their loss does not affect the IR condition. All other agents, including $i$, remain bound by their existing contracts and vote truthfully according to the updated utilities.  Therefore, $(P', \tau', S')$ constitutes an IR-deviation from $(P, \tau, S)$. \\

\textbf{Step 2 ($\implies$):}
We prove that the existence of an IR-feasible state $(P', \tau', S')$ with $f(P') \in \argmax_{a \in \calA} SW(a)$ implies at least one of (P1), (P2), (P3) by contraposition. In particular, we assume that $\forall i \in \calN$ with $\tau_i(b) > 0$ that none of (P1), (P2), or (P3) holds in order to demonstrate that such a state $(P', \tau', S')$ cannot exist.

As a preliminary case, suppose that $\tau_i(b)=0$ for all $i \in \calN$. 
Then for any alternative $c \in \arg\max_{a \in \calA} SW(a)$ we have $U_i(b)=U_i(c)$ for all $i$, by Lemma~\ref{lem:full_coverage_consensus}. 
For a deviation $(P',\tau',S')$ to satisfy $f(P')=c$, it must therefore hold that $U'_i(c)\ge U'_i(b)$ for all $i$, which implies $U'_i(c)=U'_i(b)$ for all $i$. 
Consequently, any nontrivial redistribution at $c$ would require some agent to donate at $b$, strictly lowering their utility at $b$ and violating IR feasibility. 
Hence no IR-feasible deviation selecting a SW-maximizer can exist in this case.

Henceforth, suppose in $(P, \tau, S)$ there exists at least one agent $i \in \calN$ so that $\tau_i(b) > 0$. We assume that $\forall i \in \calN$ with $\tau_i(b) > 0$, either one of the following cases holds:
\begin{itemize}
    \item \textbf{Case 1:} $\exists x \in NBA_i$ with $U_i(b) = U_i(x)$ and $\tau_i(x) \leq 0$;
    \item \textbf{Case 2:} Each of the following holds:
    \begin{itemize}
        \item $\forall x \in NBA_i$, $\tau_i(x) > 0$ and $U_i(x) = U_i(b)$;
        \item $\exists x \in NBA_i$, $\forall j \in \calN$ so that $\tau_j(x) < 0$, we have $U_j(x) = U_j(b)$; hence, $\calB_i \neq \emptyset$;
        \item $\bigcap_{x \in \calB_i} \bar{J}_x = \emptyset$.
    \end{itemize}
\end{itemize}
We now argue that Cases 1 and 2 exhaust all possibilities consistent with the negation of (P1), (P2), and (P3). 
Fix any agent $i \in \calN$ with $\tau_i(b)>0$. 
Since (P1) fails, $U_i(b)$ cannot be strictly greater than all other alternatives, and therefore there exists some $x \in NBA_i$ such that $U_i(b)=U_i(x)$. 
If $\tau_i(x)\le 0$ for some such $x$, then Case~1 holds. 
Otherwise, $\tau_i(x)>0$ for all $x \in NBA_i$. 
Failure of (P2) then implies that there exists at least one $x \in NBA_i$ such that every donor at $x$ is tight, i.e., $\bar J_x = J_x$, so that $\calB_i \neq \emptyset$. 
Finally, failure of (P3) implies that no single agent is a tight donor across all $x \in \calB_i$, or equivalently that $\bigcap_{x \in \calB_i} \bar J_x = \emptyset$, which is precisely Case~2.

Suppose for contradiction that $\exists (P', \tau', S')$ that is IR-feasible from $(P, \tau, S)$ with $f(P') = c \in \argmax_{a \in \calA} SW(a)$. By Lemma \ref{lem:full_coverage_consensus} we have $U_i(b) = U_i(c), \forall i \in \calN$, and $U'_i(b) = U'_i(c)$. Hence, without loss of generality, we can take $c=b$ since in order to update the distribution $(U'_i(c))_{i \in \calN}$ from $(U_i(c))_{i \in \calN}$, the distribution of $(U_i(b))_{i \in \calN}$ must be updated identically.

Then $U'_i(c) \geq U_i(b)$, $\forall i \in S'$, with at least one agent $h \in S'$ with $U'_h(c) > U_h(c)$. By budget balance, this implies $\exists k \in \calN \setminus S'$ so that $U'_k(b) < U_k(b)$ in order to support $h$'s utility growth. If $0 \geq \tau_k(b)$, then by Lemma \ref{lem:observable_participation} we would have $k \in S'$, a contradiction. Hence $k$ has $\tau_k(b) > 0$. There are now two possibilities: either $k$ satisfies Case 1 or Case 2. We consider these in sequence.

\textbf{Case 1.} Suppose $\exists x \in NBA_i$ with $U_i(b) = U_i(x)$ and $\tau_i(x) \leq 0$. 
Since $f(P') \in \argmax_{a \in \calA} SW(a)$, we must have $U'_i(f(P')) \geq U'_i(a)$, $\forall a \in \calA, \forall i \in \calN$. In particular, we require $U'_k(b) \geq U'_k(x)$, $\forall x \in NBA_i$. Since $\exists x \in NBA_k$ so that $U_k(b) = U_k(x)$ with $\tau_k(x) \leq 0$, this implies
\[
U_k(b) = U_k(x) > U'_k(b) \geq U'_k(x),
\]
so that $0 \geq \tau_k(x) > \tau'_k(x)$. Hence, $k \in S'$ by Lemma \ref{lem:observable_participation}, forming a contradiction.

\textbf{Case 2.}
Recall from Definition \ref{dfn:binding_NBAs} that
\[
\calB_k = \{x \in NBA_i : \forall j \in \calN ~(\tau_j(x) < 0 \implies U_j(x) = U_j(b))\}.
\]
Suppose that $\calB_k \neq \emptyset$ and that $\bigcap_{x \in \calB_i} \bar{J}_x = \emptyset$, where we recall
\[
\bigcap_{x \in \calB_i} \bar{J}_x =  \{j \in \calN : \tau_j(x) < 0 , U_j(x) = U_j(b) \}.
\]
This implies that for $x, y \in NBA_k$, where $x \neq y$, then $\forall j_x \in \bar{J}_x, \forall j_y \in \bar{J}_y$, all of the following holds:
\begin{itemize}
    \item $\tau_{j_x}(x) < 0$, $U_{j_x}(x) = U_{j_x}(b)$,
    \item $\tau_{j_y}(y) < 0$, $U_{j_y}(y) = U_{j_y}(b)$,
    \item either $\tau_{j_x}(y) \geq 0$ or $U_{j_x}(y) < U_{j_x}(b)$,
    \item either $\tau_{j_y}(x) \geq 0$ or $U_{j_y}(x) < U_{j_y}(b)$.
\end{itemize}
Hence, in order to maintain $U_k'(b) = U_k(b) - \epsilon$ for some $\epsilon > 0$, both $j_x$ and $j_y$ must claim $\epsilon$ in order to maintain $U'_{j_x}(b) \geq U'_{j_x}(a), \forall a \in \calN$, and $U'_{j_y}(b) \geq U'_{j_y}(a), \forall a \in \calN$. Clearly this cannot hold, so no such coalition $S'$ can contain both $j_x$ and $j_y$. Therefore $k$ cannot reduce their utility of $b$ by $\epsilon$ without violating $U'_k(b) \geq U'_k(a), \forall a \in \calN$, contradicting the existence of $(P', \tau', S')$.

This concludes the proof.
\end{proof}

\begin{lem}
The Lexicographic AMR Rule (Definition~\ref{dfn:lexicographic_AMR}) is anonymous, monotone, and resolute.
\label{lem:lex_rule_is_AMR}
\end{lem}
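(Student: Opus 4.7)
\begin{proofsketch}
The plan is to verify each of the three properties in turn, directly from the construction of the lexicographic rule. Since this rule is built from a lexicographic comparison of score vectors derived from the anonymous vote profile, each property reduces to a structural observation about how those vectors behave under the relevant transformation.

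First, for \emph{anonymity}, I would observe that the score vectors assigned to each candidate are built out of sums or counts over the submitted rankings, with no reference to individual agent identities. Any permutation of voters therefore leaves each candidate's score vector unchanged, and consequently preserves the lexicographic ordering between candidates. Hence the selected winner is invariant under renaming of agents.

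Second, for \emph{monotonicity}, fix a profile $P$ with winner $c=f(P)$, and suppose a single voter raises $c$ in their ranking to obtain $P'$. The key step is to show that this promotion (i) weakly improves $c$'s score vector in the lexicographic order, and (ii) weakly worsens, in the same order, the score vector of every other candidate $d \neq c$. Combining (i) and (ii) preserves $c$'s lexicographic dominance, so $f(P')=c$. I expect this to be the subtle step: a promotion that swaps $c$ with the candidate $d$ immediately above it changes exactly two entries in $d$'s and $c$'s count vectors, and one must verify coordinate-by-coordinate that these changes only push $d$ down and $c$ up in the lexicographic order, rather than inadvertently allowing some third candidate $e$ to overtake $c$.

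Third, for \emph{resoluteness}, the lexicographic order on score vectors already singles out a unique maximum on any finite set of candidates unless two candidates realize identical vectors; in that case, the fixed tie-breaking order over $\calA$ declared in the Preliminaries breaks any remaining ties. Hence exactly one alternative is returned on every profile. The main obstacle will be monotonicity, because the interaction between the lexicographic comparison and the tie-breaking order can be delicate: a promotion that leaves $c$'s vector unchanged at leading coordinates but altered at a later one could in principle let a rival leapfrog $c$ via the tie-breaker. I plan to handle this by splitting the argument according to whether $c$'s dominance over each rival is strict at the first differing coordinate or relies only on the tie-break, and verifying that neither case allows $c$ to lose after the promotion.
\end{proofsketch}
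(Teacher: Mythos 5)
Your proof is for the wrong rule. Definition~\ref{dfn:lexicographic_AMR} does not define a lexicographic comparison of score vectors: it fixes a linear ordering $O = (o_1,\dots,o_m)$ of the \emph{alternatives} and returns the first $o_k$ in $O$ that appears as $top(R_j)$ for at least one agent $j$. No counts, sums, or per-candidate vectors are involved; the outcome depends only on the \emph{set} of top-ranked alternatives in the profile. Consequently, all of the machinery you propose --- coordinate-by-coordinate comparison of count vectors under a promotion, the worry that a third candidate's vector might overtake the winner's, and the delicate interaction between lexicographic dominance and tie-breaking --- addresses a different object (something like a leximax scoring rule) and does not establish the lemma as stated.

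For the rule actually defined, the argument is much shorter, and this is the route the paper takes. Anonymity holds because permuting agents leaves the set of top-ranked alternatives unchanged. Resoluteness holds because the scan through the fixed ordering $O$ returns exactly one alternative (and clause~1 of the definition is subsumed by clause~2). For monotonicity, suppose $f(P)=a$, so $a$ is the earliest element of $O$ among the top-ranked alternatives of $P$. If some agent raises $a$ in their ranking, the set of top-ranked alternatives either stays the same or loses one element $d\neq a$ (the agent's former top choice) while gaining $a$ as that agent's new top; in either case the set of top-ranked alternatives other than $a$ can only shrink, $a$ remains present, and hence $a$ remains the earliest element of $O$ appearing in that set, so $f(P')=a$. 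You should discard the score-vector framework entirely and argue directly from the set of top choices.
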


\begin{proof}
Let $f$ be the lexicographic rule with fixed ordering $O = (o_1, \dots, o_m)$, applied to any profile $P = (R_1, \dots, R_n) \in \calL(\calA)^n$.

\textbf{1. Resoluteness:}  
By construction, $f$ always selects the first alternative in $O$ that appears as the top-ranked choice of some agent.  
If all agents rank the same alternative $c$ at the top, $f(P) = c$.  
Otherwise, the scanning procedure through $O$ guarantees exactly one winner.  
Thus, $f$ is resolute.

\textbf{2. Anonymity:}  
The outcome depends only on which alternatives appear at the top of the rankings, not on which agents submitted which rankings.  
Permuting the agents (reordering $R_1, \dots, R_n$) does not change the set of top-ranked alternatives.  
Hence, $f$ is anonymous.

\textbf{3. Monotonicity:}  
Suppose $a \in \calA$ is the outcome under profile $P$, i.e., $f(P) = a$.  
If some agent raises $a$ in their ranking (possibly to the top), $a$ either remains the first top-ranked alternative in $O$ or appears earlier in the scanning order, so $f(P)$ cannot change away from $a$.  
Lowering any other alternative cannot hurt $a$, since the lexicographic scan prioritizes $a$ over alternatives lower in $O$.  
Thus, moving $a$ up in any agent’s ranking cannot decrease the chance that $f(P) = a$, satisfying monotonicity.

Since all three properties hold, the lexicographic rule is AMR.
\end{proof}

\begin{lem}
If an alternative $b$ has full coverage under $u+\tau$, then for every
$x \in \arg\max_{a \in \calA} SW(a)$ we have $U_i(b)=U_i(x)$ for all $i \in \calN$.
\label{lem:full_coverage_consensus}
\end{lem}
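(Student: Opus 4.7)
The plan is to exploit the pointwise nature of full coverage and the budget balance of $\tau$ to reduce everything to the observation that two nonnegative-gap, equal-sum vectors must agree coordinate-wise.

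First, I would record what full coverage buys us: for every $a \in \calA$ and every $i \in \calN$,
\[
U_i(b) = u_i(b) + \tau_i(b) \ge u_i(a) + \tau_i(a) = U_i(a).
\]
Specializing to $a = x$, we get $U_i(b) \ge U_i(x)$ for all $i$.

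Next, I would use budget balance of the transfer scheme $\tau$. For any alternative $a \in \calA$,
\[
\sum_{i \in \calN} U_i(a) = \sum_{i \in \calN} u_i(a) + \sum_{i \in \calN} \tau_i(a) = SW(a),
\]
since the transfers at $a$ net to zero. Summing the pointwise inequality $U_i(b) \ge U_i(x)$ across $i$ therefore gives $SW(b) \ge SW(x)$. But $x \in \arg\max_{a \in \calA} SW(a)$ forces $SW(x) \ge SW(b)$, so in fact $SW(b) = SW(x)$, which by the same identity means $\sum_i U_i(b) = \sum_i U_i(x)$.

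Finally, I would close the argument with the elementary observation that if $(U_i(b))_{i \in \calN}$ dominates $(U_i(x))_{i \in \calN}$ coordinate-wise and the two vectors have equal sums, then any strict inequality in some coordinate would make the sum strictly larger. Hence $U_i(b) = U_i(x)$ for every $i \in \calN$, as required. There is no real obstacle here; the only subtlety to flag is that full coverage is strong enough to imply $b$ itself is welfare-maximizing, which is what makes the coordinate-wise equality work.
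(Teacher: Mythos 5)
Your proof is correct and follows essentially the same route as the paper's: sum the pointwise full-coverage inequality $U_i(b)\ge U_i(x)$, use budget balance to identify the sums with $SW(b)$ and $SW(x)$, invoke welfare-maximality of $x$ to force equality of sums, and conclude that all coordinate-wise inequalities bind. The only difference is that you make the budget-balance step $\sum_i U_i(a) = SW(a)$ explicit, which the paper leaves implicit.
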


\begin{proof}
Full coverage implies $U_i(b)\ge U_i(x)$ for all $i$.
Summing over $i$ yields $SW(b)\ge SW(x)$.
Since $x$ is welfare-maximizing, equality must hold, and hence all individual
inequalities bind.
\end{proof}


\subsection{Full Proof of Propositions}
\label{apx:proof_of_props}

\setcounter{prop}{0}
\begin{prop}
Given base utilities $u$ and anonymous and monotonic rule $f$, consider the state $(P, \tau, S)$. If $f(P) \notin \argmax_{a \in \calA} SW(a)$, then $(P, \tau, S)$ is not an SNE.
\end{prop}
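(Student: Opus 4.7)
The plan is to exhibit a strict IR coalitional deviation by the grand coalition $S'=\calN$ whose outcome is a social welfare maximizer, which refutes SNE. Let $b=f(P)$ and fix any $c\in\argmax_{a\in\calA}SW(a)$; by the hypothesis $f(P)\notin\argmax_{a\in\calA}SW(a)$, we have $\Delta := SW(c)-SW(b)>0$.

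The core step is constructing a budget-balanced $\tau'$ under which $c$ is fully covered and every agent strictly improves over their current utility $U_i(b)$. Setting $\tau'_i(c) := U_i(b)+\Delta/n-u_i(c)$ yields $U'_i(c)=U_i(b)+\Delta/n$; budget balance at $c$ holds because $\sum_i U_i(b)=SW(b)$ (by budget balance of $\tau$), giving $\sum_i U'_i(c)=SW(c)$. For each $a\neq c$, I would take $\tau'_i(a) := U_i(b)+\Delta/n-u_i(a)-(SW(c)-SW(a))/n$. A direct computation gives $\sum_i\tau'_i(a)=0$ and $U'_i(a)=U'_i(c)-(SW(c)-SW(a))/n \le U'_i(c)$ for every $i$, so $c$ has full coverage under $u+\tau'$.

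Next, the grand coalition has every agent submit a ranking with $c$ on top (which also coincides with truthful reporting under $u+\tau'$). Under an AMR rule with $c$ in the range of $f$, monotonicity combined with anonymity forces the unanimous top-$c$ profile to elect $c$: starting from any $P_0$ with $f(P_0)=c$ and iteratively raising $c$ for each agent preserves $c$ as winner, and by anonymity the order of these raises does not matter. Hence $f(P')=c$, and every agent gains a uniform surplus $\Delta/n>0$, so the deviation is IR with strict improvement, and $(P,\tau,S)$ cannot be an SNE.

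The hard part is rigorously invoking the raise-to-top argument: monotonicity alone does not immediately yield that an AMR rule selects any unanimously top-ranked alternative unless $c$ lies in the range of $f$, which is the standard non-imposition property satisfied by every voting rule considered in this paper. A minor care point is that if some $a'$ shares $SW(a')=SW(c)$, then $U'_i(a')=U'_i(c)$ for every $i$, and the coalition may need to coordinate votes so the fixed tie-breaking order picks some welfare maximizer; any such outcome still delivers a strict uniform gain $\Delta/n$, which suffices to refute SNE.
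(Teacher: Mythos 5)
Your proof is correct and follows essentially the same route as the paper's: construct a budget-balanced transfer scheme under which a welfare-maximizing alternative is fully covered and every agent (weakly) improves, then invoke monotonicity and unanimous top-ranking to elect it. The only differences are that you give an explicit formula for $\tau'$ (with a uniform strict gain of $\Delta/n$) where the paper merely asserts that such a scheme exists, and you correctly flag the non-imposition caveat on which the paper's appeal to monotonicity also implicitly relies.
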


\begin{proof}
We prove the claim by constructing $(P', \tau', S')$ with $f(P') \in \argmax_{a \in \calA} SW(a)$, which is a feasible IR-deviation from $(P, \tau, S)$. Denote $c = f(P)$ and $b \in \argmax_{a \in \calA} SW(a)$. Define $\tau'$ so that:
\begin{itemize}
    \item $\tau'_i(c) = \tau(c),\quad \forall i \in \calN$;
    \item $U'_i(b) \geq U_i(c),\quad \forall i \in \calN$, with at least one agent having a strict inequality;
    \item $U'_i(b) \geq U'_i(a),\quad \forall i \in \calN, \forall a \in \calA \setminus\{b, c\}$. 
\end{itemize}
By monotonicity of $f$ and full coverage of $b$ under $u+\tau'$, we have $f(P') = b$. Since $SW(b) > SW(c)$, by assumption, such a transfer scheme $\tau'$ exists. Let $S'$ consist of all agents whose transfers are actively modified to achieve $U'_i(b)$ or strategically vote in favor of $b$. The state $(P', \tau', S')$ is IR with respect to $(P, \tau, S)$ since $U'_i(b) \geq U_i(c)$, $\forall i \in \calN$ with at least one agent having a strict inequality. Therefore $(P, \tau, S)$ is not an SNE.
\end{proof}

\setcounter{prop}{1}
\begin{prop}
Given base utilities $u$ and resolute consensus rule $f$, consider the truthful state $(P^*, \emptyset, \emptyset)$. If $f(P^*) \in \argmax_{a \in \mathcal{A}} SW(a)$, then $(P^*, \emptyset, \emptyset)$ is an SNE.
\end{prop}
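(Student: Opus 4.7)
The plan is to argue by contradiction: assume there exists an IR-feasible coalitional deviation $(P',\tau',S')$ from $(P^*,\emptyset,\emptyset)$ and derive an impossibility. Let $c=f(P^*)\in\argmax_{a\in\calA}SW(a)$ and $b=f(P')$. The foundational step I would establish first is to apply Lemma~\ref{lem:observable_participation} with baseline $\tau\equiv 0$: for every $i\notin S'$ and every $a\in\calA$, this forces $\tau'_i(a)\ge 0$, hence $U'_i(a)\ge u_i(a)$. This uniform nonnegativity of transfers on non-participants is what makes the subsequent summation arguments go through, and it is the only place the lemma is invoked.

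Next I would split on whether the deviation preserves the outcome. In the \emph{outcome-preserving} case $b=c$, the argument is short: IR with at least one strict inequality yields $\sum_{i\in S'}\tau'_i(c)>0$; budget balance then forces $\tau'_i(c)<0$ for some $i\notin S'$, directly contradicting observable participation. In the \emph{outcome-changing} case $b\neq c$, I would branch further on whether $b$ is the default $a^*$. If $b\neq a^*$, consensus requires every agent to rank $b$ first in $P'$, so for non-participants $U'_i(b)\ge U'_i(c)\ge u_i(c)$; combining with IR on the coalition and summing gives $SW(b)=\sum_i U'_i(b)>\sum_i u_i(c)=SW(c)$, contradicting $c\in\argmax SW$. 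If instead $b=a^*$, then $c\neq a^*$, so under consensus $c$ was unanimously top-ranked truthfully, whence $u_i(c)\ge u_i(a^*)$ for every $i$; the strict IR sum over the coalition together with $\tau'_i(a^*)\ge 0$ off the coalition yields $SW(a^*)>\sum_{i\in S'}u_i(c)+\sum_{i\notin S'}u_i(a^*)\ge SW(a^*)$, a contradiction.

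The main obstacle I expect is navigating the asymmetry between the default $a^*$ and non-default alternatives in the consensus rule, which drives the four-way case analysis on $(c\in\{a^*,\neq a^*\})\times(b\in\{a^*,\neq a^*\})$. The delicate sub-case is $b=a^*$ with $c\neq a^*$, since closing the inequality chain depends on the unanimous-top-ranking property $u_i(c)\ge u_i(a^*)$, which is only available precisely when $c\neq a^*$; and one must verify that the combined non-participant nonnegativity and coalition IR telescope cleanly against $SW(a^*)$ without extra side conditions. The complementary sub-case $c=a^*$ is not a real obstruction because any outcome-changing deviation then automatically has $b\neq a^*$, reducing to the first sub-case and leaving the analysis exhaustive.
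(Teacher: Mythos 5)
Your proof is correct and follows essentially the same route as the paper's: a case split on whether the deviating winner is the default alternative, with non-participants' transfers forced nonnegative by Lemma~\ref{lem:observable_participation}, and budget balance plus welfare-maximality of $f(P^*)$ closing each case. The one substantive difference is that you explicitly dispose of outcome-preserving deviations ($f(P')=f(P^*)$ with redistributed transfers), whereas the paper's proof opens by assuming $f(P)\neq f(P^*)$ and never returns to that possibility; your short budget-balance argument fills that small gap. Your summation arguments in the outcome-changing cases are also somewhat cleaner than the paper's pointwise bookkeeping --- the paper instead locates a specific agent whose utility must drop, contradicting full coverage in its Case~(I) and contradicting IR via observable participation in its Case~(II) --- but the underlying ingredients (full coverage under consensus, IR, budget balance, and the nonnegativity of non-participants' transfers from the empty baseline) are identical.
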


\begin{proof}
Suppose for contradiction that there is an IR-feasible state $(P, \tau, S)$ so that $f(P) = c$, $f(P^*) = b$, and $b \neq c$. There are two ways for $c$ to become the winner when $f$ is consensus: either (I) $c$ is not the default outcome, but has full coverage under $u+\tau$, or (II) $c$ is the default outcome and $b$ has full coverage under $u$. We consider these in sequence.

\noindent \textbf{Case (I):}
Since $f$ is the consensus rule, we must have
\[
U_i(c) \geq U_i(b), \quad \forall i \in \calN.
\]
By definition of IR, we must have
\[
U_i(c) \geq u_i(b), \quad \forall i \in S
\]
where one agent has a strict inequality. Hence, $\exists \epsilon > 0$ so that some agent $i \in S$ has $U_i(c) \geq u_i(b)+\epsilon$.  Since transfers are budget balanced, $\sum_j U_j(c)=\sum_j u_j(c)\le\sum_j u_j(b)=\sum_j U_j(b)$. Combined with strict improvement for some $i\in S$, this implies that there exists $j$ such that $U_j(c)<U_j(b)$, since otherwise $\sum_j U_j(c) \ge \sum_j U_j(b)$. This contradicts full coverage of $c$ under $u+\tau$.

\noindent \textbf{Case (II):} By full coverage of $b$ under $u$, we have
\[
u_i(b) \geq u_i(a^*), \quad \forall i \in \calN.
\]
By definition of IR, we must have
\[
U_i(a^*) \geq u_i(b), \quad \forall i \in S
\]
where one agent has a strict inequality. This entails $\exists i \in S$ so that $U_i(a^*) > u_i(b) \geq u_i(a^*)$. Hence, $\tau_i(a^*) > 0$. Since all agents have zero initial transfers from the truthful state, $\exists j~:~\tau_j(a^*) < 0$. Hence, by lemma \ref{lem:observable_participation}, $j \in S$. Thus, $\exists j \in S$ so that $U_j(a^*) < u_j(b)$, forming a contradiction.
\end{proof}

\setcounter{prop}{2}
\begin{prop}
Given base utilities $u$ and consensus rule $f$ with default $a^* \in \calA$, consider the truthful state $(P^*, \emptyset, \emptyset)$ with $f(P^*) = a^* \notin \argmax_{a \in \calA} SW(a)$. Algorithm \ref{alg:SW_stacked} yields a contract scheme $\tau$ so that $(P, \tau, S)$ is an IR-SNE, where $f(P) \in \argmax_{a \in \calA} SW(a)$ and
\[
S = \bigcup_{t < a^*} D_t \cup \left\{ i \in \calN : \argmax_{a \in \calA} U_i(a) \neq \argmax_{a \in \calA} u_i(a) \right\}
\]
is the set of agents that actively participate in the algorithm via donations or strategic voting.
It runs in $\mathcal{O}(nm)$ time.
\end{prop}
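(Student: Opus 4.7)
The plan is to verify the three conditions of Theorem~\ref{thm:IR_SNE_characterization} directly from the construction of Algorithm~\ref{alg:SW_stacked}, and then bound its work. The first task is to establish the structural invariants maintained by the loops. Budget balance at every alternative holds by construction: at $a^*$ every transfer is zero; at each $t>a^*$, $\sum_i U_i(t)=\tfrac{SW(t)}{SW(a^*)}\sum_i u_i(a^*)=SW(t)$, forcing $\sum_i\tau_i(t)=0$; and at each $t<a^*$ the loop explicitly balances the donor pool against the receiver deficit. A short induction on $t$ then shows that after iteration $t$ every agent satisfies $U_i(t)\ge U_i(t+1)$: receivers in $R_t$ are lifted exactly to $U_i(t+1)$, and each donor in $D_t$ retains at least $U_i(t+1)$. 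Feasibility of the deficit-distribution step follows from $SW(t)\ge SW(t+1)$, which gives $\sum_{i\in D_t}(u_i(t)-U_i(t+1))\ge\sum_{i\in R_t}(U_i(t+1)-u_i(t))$. Chaining these inequalities yields the ladder $U_i(1)\ge\cdots\ge U_i(a^*)=u_i(a^*)\ge\cdots\ge U_i(m)$ for every $i$, so $b=1\in\argmax SW$ has full coverage under $u+\tau$ and the state is IR-feasible from $(P^*,\emptyset,\emptyset)$. This settles conditions (1) and (2) of Theorem~\ref{thm:IR_SNE_characterization}.

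The main obstacle is condition (3): showing that no agent $i$ with $\tau_i(b)>0$ satisfies any of (P1)--(P3) from Lemma~\ref{lem:sw_max_swap}. The set of such receivers at $b=1$ is exactly $R_1$. For any $i\in R_1$, the equalization in line 12 enforces $U_i(1)=U_i(2)$, which places alternative $2$ in $NBA_i$ and falsifies (P1). The key observation for (P2) is that the deficit distribution is calibrated to leave each donor with exactly $U_j(t)=U_j(t+1)$, so at the NBA $x=2$ every $j\in D_1$ is tight with $U_j(2)=U_j(1)$; no loose donor exists at $x=2$, which falsifies the existential clause of (P2). For (P3), I plan to exploit the disjoint-iteration structure: tight-donor sets $\bar J_x$ associated with binding NBAs $x\in\mathcal{B}_i$ produced at different iterations correspond to different donor populations $D_t$ and are therefore not simultaneously populated by any single agent, ruling out the common-donor required by (P3). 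Carrying out this cross-iteration bookkeeping carefully is the technical core and is what I would defer to Appendix~\ref{apx:proof_of_props}.

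Once condition (3) is verified, Lemma~\ref{lem:no_Astar_transfer} rules out any IR-feasible deviation that reinstates $a^*$, and Lemma~\ref{lem:full_coverage_consensus} rules out deviations to any non--welfare-maximizing alternative, so $(P,\tau,S)$ is an IR--SNE with $f(P)\in\argmax_{a\in\calA} SW(a)$. The coalition $S$ is identified via Lemma~\ref{lem:observable_participation}: the agents whose outgoing transfers change are exactly those appearing as donors at some iteration $t<a^*$, yielding $\bigcup_{t<a^*}D_t$, while the remaining members are precisely those whose top-ranked alternative under $u+\tau$ differs from that under $u$ and who therefore vote non-truthfully. Finally, the runtime bound is immediate from the loop structure: each of the $m$ alternatives is visited once, and within each iteration the partition into $R_t,D_t$ and the distribution of the deficit are each linear in $n$, giving $\mathcal{O}(nm)$ overall.
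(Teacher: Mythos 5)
Your setup (budget balance, the monotone ladder $U_i(1)\ge\cdots\ge U_i(m)$, full coverage of $b=1$, IR-feasibility, the identification of $S$, and the $\mathcal{O}(nm)$ bound) is sound and matches the paper, as does your falsification of (P1) via $U_i(1)=U_i(2)$ for $i\in R_1$. The gap is in your treatment of (P2) and (P3), where you argue on the \emph{donor} side and rely on two claims that the algorithm does not guarantee. First, line 14 only requires each donor to retain $u_j(t)+\tau_j(t)\ge U_j(t+1)$; since the donor surplus $\sum_{i\in D_t}(u_i(t)-U_i(t+1))$ exceeds the receiver deficit by exactly $SW(t)-SW(t+1)$, which is typically strictly positive, donors are generically \emph{not} left tight. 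Moreover, tightness in the sense relevant to (P2)/(P3) means $U_j(x)=U_j(b)$, not $U_j(x)=U_j(x+1)$, and the donors relevant at an NBA $x$ are those with $\tau_j(x)<0$ (i.e., $D_x$), not $D_1$. Second, your plan for (P3) assumes the sets $D_t$ at different iterations are disjoint, but nothing prevents an agent from satisfying $u_i(t)>U_i(t+1)$ at several values of $t$, so a single agent can be a donor at multiple alternatives and the claimed impossibility of a common tight donor does not follow. The deferred ``cross-iteration bookkeeping'' therefore cannot be completed as described.

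The paper avoids donors entirely and falsifies (P2) and (P3) through their shared \emph{first} clause, which concerns the receiver $i$ itself: for any $i$ with $\tau_i(b)>0$, the equalities $U_i(t)=U_i(t+1)$ propagate upward as long as $i\in R_t$, so every such alternative lies in $NBA_i$; since $\tau_i(a^*)=0$, there is a first alternative $a_t\in\{2,\dots,a^*\}$ with $\tau_i(a_t)\le 0$, and by the chain of equalities $U_i(a_t)=U_i(b)$, so $a_t\in NBA_i$. This single alternative violates the requirement ``$\tau_i(x)>0$ for all $x\in NBA_i$'' common to (P2) and (P3), so neither condition can hold and no analysis of $\bar J_x$ or $\calB_i$ is needed. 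I recommend replacing your donor-side argument with this receiver-side one.
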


\begin{proof}
Lines 3-5 of the algorithm ensure that $\tau_i(a^*) = 0$, $\forall i \in \calN$. Lines 6-8 ensure that all alternatives who have weakly less social welfare than $a^*$ are fully covered by $a^*$ in $u+\tau$. They can thus be overlooked for equilibrium considerations, without loss of generality. By assumption of the proposition, $a^*$ is not a SW maximizer, so lines 9-18 run for some alternatives $t \in \{a^*-1, \ldots, 1\}$. Line 10 partitions the set of agents who are \emph{receivers} of utility $R_t$ on alternative $t$, as those agents with strictly smaller utility on alternative $t$ than $t+1$, and \emph{donors} $D_t$, who have strictly greater utility. On lines 11-13, receivers $i \in R_t$ receive exactly $\tau_i(t) = U_i(t+1) - u_i(t)$ to make up for this lack, while donors make up for this deficit on line 146. Lines 15-17 account for budget balance for $\tau(t)$ across all agents. 

By Proposition \ref{prop:non_welfare_max_is_not_SNE}, we know that $(P^*, \emptyset, \emptyset)$ is not an SNE, which suggests that $(P, \tau, S)$ could conceivably be in equilibrium. Since $U_i(b) \geq U_i(a)$, $\forall a \in \calA$, and $U_i(b) \geq u_i(a^*)$, $\forall i \in \calN$, $(P, \tau, S)$ is IR-feasible from the truthful state $(P^*, \emptyset, \emptyset)$. We thus have to prove that there is no IR-feasible state $(P', \tau', S')$. For consistency of notation, denote $b =1 \in \argmax_{a \in \calA} SW(a)$. 

By Lemma \ref{lem:no_Astar_transfer}, there is no IR-feasible state $(P', \tau', S')$ with $f(P') = a^*$. There is clearly no IR-feasible state $(P', \tau', S')$ with $SW(f(P')) < SW(f(P))$ since $f$ is the consensus rule, following Lemma \ref{lem:full_coverage_consensus}. Hence, we must check that none of the three conditions of Lemma \ref{lem:sw_max_swap} hold for any $i \in \calN$ with $\tau_i(b) > 0$. We confirm this as follows.

\textbf{Check (P1).} Clearly alternative $b=1$ has full coverage in $u+\tau$. By line 12, we have that $U_i(b) = U_i(x)$ for some $x \in NBA_i$, $\forall i \in \calN$ so that $\tau_i(b) > 0$. Therefore (P1) does not hold for every such $i$.

\textbf{Check (P2) and (P3).} Notice that $\tau_i(a^*) = 0$, $\forall i \in \calN$, is assigned on lines 3-5 and that $U_i(t) = U_i(t+1)$ for all $t \in \{1, \ldots, a^*-1\}$ and $\forall i \in R_t$. Hence, $\forall i \in \calN$ with $\tau_i(b) > 0$, $\exists a_t \in \{2, \ldots, a^*\}$ so that $\tau_i(a) > 0$, $\forall a < a_t$, and $\tau_i(a_t) \leq 0$. Therefore, neither (P2) nor (P3) holds for any $i \in \calN$ with $\tau_i(b) > 0$.

Note that for each alternative, all agents are iterated through a linear number of times. Hence, the running time of the algorithm is $\mathcal{O}(nm)$.
\end{proof}



\setcounter{prop}{3}
\begin{prop}[IR-Stability via Indifferent Donors]
Fix base utilities $u$ and consider a state $(P,\tau,S)$ that is IR-feasible from the truthful state $(P^*,\emptyset,\emptyset)$.
Suppose that $f(P^*) = a^* \notin \arg\max_{a \in \calA} SW(a)$, that $f(P)=b \in \arg\max_{a \in \calA} SW(a)$, and that $b$ has full coverage under $u+\tau$.

Then there exists an AMR rule $f$ and an IR-feasible state $(P',\tau',S')$ with $f(P') = \hat a \in \calA \setminus \{b\}$ 
\emph{if and only if} there exists $j \in \mathcal{N}$ such that
\[
\calRA(j, S_j(\hat a), \hat a) > U_j(b) - U_j(\hat a),
\]
where
\[
S_j(\hat a) = \{ i \in \mathcal{N} \setminus \{j\} : U_i(b) = U_i(\hat a), \tau_i(\hat{a})<0 \}.
\]
\end{prop}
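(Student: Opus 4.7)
\medskip
\noindent\textbf{Proof proposal.} The plan is to prove the two directions separately, using Lemma~\ref{lem:observable_participation} as the key bookkeeping tool and the lexicographic AMR rule of Lemma~\ref{lem:lex_rule_is_AMR} to supply the outcome-selection step.

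\medskip
\noindent\textit{Sufficiency.} Given $j$ satisfying the reallocatable-amount inequality, I would construct the deviation explicitly. Form the coalition $S' = \{j\} \cup S_j(\hat a)$ and modify transfers only at $\hat a$: each indifferent donor $i \in S_j(\hat a)$ re-targets its outgoing payments at $\hat a$ to pay $j$ rather than their original recipients (leaving $\tau'_i(\hat a) = \tau_i(\hat a)$), and $j$ itself cancels any outgoing donation at $\hat a$. Together these raise $\tau'_j(\hat a) - \tau_j(\hat a)$ to exactly $\calRA(j, S_j(\hat a), \hat a)$, so $U'_j(\hat a) > U_j(b)$ by hypothesis. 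Take $f$ to be the lexicographic AMR rule with $\hat a$ placed first in the ordering; since $j$ sincerely ranks $\hat a$ top under $u+\tau'$, Lemma~\ref{lem:lex_rule_is_AMR} gives $f(P') = \hat a$. IR is then immediate: $j$ strictly benefits, each indifferent donor has $U'_i(\hat a) = U_i(\hat a) = U_i(b)$, and any non-coalition recipient whose net transfer decreases was originally a strict receiver and can be kept non-negative by spreading the reallocation across recipients, so Lemma~\ref{lem:observable_participation} does not force anyone new into $S'$.

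\medskip
\noindent\textit{Necessity.} Assume some AMR rule $f$ and IR-feasible state $(P', \tau', S')$ exist with $f(P') = \hat a$. Pick $j^* \in S'$ strictly improving, so $\delta_{j^*} := \tau'_{j^*}(\hat a) - \tau_{j^*}(\hat a) > U_{j^*}(b) - U_{j^*}(\hat a)$. The first step is to show that every agent $i$ with $\delta_i < 0$ lies outside $S'$: IR together with full coverage forces $\delta_i \geq U_i(b) - U_i(\hat a) \geq 0$ for $i \in S'$, and Lemma~\ref{lem:observable_participation} then forces $\tau_i(\hat a) > 0$ and $\tau'_i(\hat a) \geq 0$. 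The second step is to identify the admissible contributors to $\delta_{j^*}$: any $i \in S' \setminus \{j^*\}$ whose contract change actually routes utility to $j^*$ without net personal gain must have $\tau'_i(\hat a) = \tau_i(\hat a)$ (pure redirection), which combined with IR forces $U_i(\hat a) = U_i(b)$ and $\tau_i(\hat a) < 0$, placing $i$ in $S_{j^*}(\hat a)$; any non-indifferent donor in $S'$ must retain at least $U_i(b) - U_i(\hat a) > 0$ of its cancellation as personal slack, consuming exactly the budget it could otherwise redirect. Combined with $j^*$'s own cancellable donation, this bounds $\delta_{j^*} \leq \calRA(j^*, S_{j^*}(\hat a), \hat a)$, and chaining with the strict lower bound on $\delta_{j^*}$ yields the desired inequality.

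\medskip
\noindent\textit{Main obstacle.} The delicate step is the upper bound in the necessity direction: formalizing that ``any strictly-$b$-preferring donor's contribution is cancelled out by its IR slack'' requires ruling out cascading reallocations in which an agent funded by yet another donor subsidizes $j^*$ indirectly. I expect the cleanest route is a pure $\tau$-vector accounting at $\hat a$, using Lemma~\ref{lem:observable_participation} to pin down who may be outside $S'$ and budget balance to show that any feasible cascade can be rewritten as an equivalent redirection scheme sourced only from $S_{j^*}(\hat a) \cup \{j^*\}$, so no slackier inequality than $\calRA > U_{j^*}(b) - U_{j^*}(\hat a)$ is ever needed.
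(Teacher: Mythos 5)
Your sufficiency direction matches the paper's: both use the lexicographic AMR rule with $\hat a$ placed first, have the indifferent donors in $S_j(\hat a)$ redirect their outgoing payments at $\hat a$ to $j$ while $j$ reclaims its own donation, and charge the resulting loss to non-coalition recipients with $\tau_i(\hat a)>0$ in a way that keeps $\tau'_i(\hat a)\ge 0$ so Lemma~\ref{lem:observable_participation} does not drag them into $S'$. That half is fine.

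The necessity direction has a genuine gap. You fix the strict beneficiary $j^*$ of the given deviation and try to prove $\delta_{j^*}\le \calRA(j^*,S_{j^*}(\hat a),\hat a)$, so that $j^*$ itself witnesses the inequality. This bound is false: the witness of the proposition need not be the strict gainer. The failure mode is exactly the one you flag as the ``main obstacle'': a \emph{non-indifferent} donor $k$ with a large donation $-\tau_k(\hat a)>U_k(b)-U_k(\hat a)$ can reclaim just enough to meet its own IR constraint and redirect the remainder to $j^*$, funding $\delta_{j^*}$ well beyond $\calRA(j^*,S_{j^*}(\hat a),\hat a)$. Concretely, take $\calA=\{a^*,b,\hat a\}$, three agents with $u_{j^*}=(0,10,5)$, $u_k=(0,10,20)$, $u_\ell=(0,20,0)$ over $(a^*,b,\hat a)$, and $\tau$ zero except $\tau_k(\hat a)=-15$, $\tau_\ell(\hat a)=+15$. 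Then $b$ is the welfare maximizer with full coverage, no agent is indifferent between $b$ and $\hat a$, so $S_{j^*}(\hat a)=\emptyset$ and $\calRA(j^*,S_{j^*}(\hat a),\hat a)=0<5=U_{j^*}(b)-U_{j^*}(\hat a)$. Yet setting $\tau'_k(\hat a)=-10$, $\tau'_{j^*}(\hat a)=+10$, $\tau'_\ell(\hat a)=0$ gives an IR deviation with $S'=\{j^*,k\}$ (agent $\ell$ stays out by Lemma~\ref{lem:observable_participation}) in which $j^*$ strictly gains $\delta_{j^*}=10$. Your claimed bound fails; the proposition is saved only because the \emph{other} agent $k$ satisfies $\calRA(k,\emptyset,\hat a)=15>5=U_k(b)-U_k(\hat a)$. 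Your ``second step'' assertion that a non-indifferent donor's IR slack ``consumes exactly the budget it could otherwise redirect'' is precisely the statement $-\tau_k(\hat a)\le U_k(b)-U_k(\hat a)$, i.e.\ the reallocatable-amount inequality for $k$ itself --- you are assuming the thing to be proved.

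The paper avoids this by arguing the necessity direction by \emph{contraposition}: assume $\calRA(i,S_i(a),a)\le U_i(b)-U_i(a)$ for \emph{every} agent $i$ and alternative $a$. That global hypothesis, applied to each non-indifferent donor $k$, yields $-\tau_k(\hat a)\le U_k(b)-U_k(\hat a)$, so $k$ must reclaim its entire donation just to satisfy its own IR and has nothing left to redirect; only then does the budget accounting at $\hat a$ close and force the contradiction. If you restructure your necessity argument this way (quantifying the hypothesis over all agents rather than only over the strict gainer), the rest of your $\delta$-vector bookkeeping goes through.
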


\begin{proof}
The proposition rests on the idea that indifferent donors, agents in $S_j(\hat{a})$, are those responsible for making $\hat{a}$ the winner from a state $(P, \tau, S)$ where $b \in \calA$ has full coverage in $u+\tau$. This follows from the definition of IR, where we require
\[
U'_i(\hat{a}) \geq U_i(b), \quad \forall i \in S',
\]
where at least one agent's inequality is strict. Intuitively, since $U_i(b) \geq U_i(\hat{a})$, $\forall i \in \calN$, this requires indifferent agents to participate. We claim that $\calRA(j, S_j(\hat{a}), \hat{a}) > U_j(b) - U_j(\hat{a})$ is precisely the condition needed for some agent $j$ to strictly prefer $\hat{a}$ in $\tau'$ to $b$ in $\tau$, thus providing sufficiency for an IR improvement. By making $j$ uncovered by $\hat{a}$ in $u+\tau'$, there is some AMR rule $f$ that returns $\hat{a}$ as the winner, as $j$ votes in favor of $\hat{a}$. We provide the formal proposition proof, as follows. \\

\noindent \textbf{Step 1 ($\impliedby$):} 
Let $f$ be the lexicographic rule, defined as follows. 

\begin{dfn}[Lexicographic Rule]
\label{dfn:lexicographic_AMR}
Fix a linear ordering $O = (o_1, o_2, \dots, o_m) \in \calL(\calA)$ of all alternatives.  
Let $P = (R_1, \dots, R_n) \in \calL(\calA)^n$ be a profile of rankings.  
Define $f : \calL(\calA)^n \to \calA$ as follows:
\begin{enumerate}
    \item If $top(R_j) = c$ for all $j \in \calN$, then $f(P) = c$.
    \item Otherwise, let $f(P)$ be the first alternative $o_k \in O$ such that $o_k$ appears as $top(R_j)$ for at least one $j \in \calN$.
\end{enumerate}
Ties are broken deterministically according to the fixed ordering $O$. Here, $top(R_j)$ denotes the top alternative in the ranking $R \in \calL(\calA)$, for some agent $j \in \calN$.
\end{dfn}

We prove the lexicographic rule is AMR in Lemma \ref{lem:lex_rule_is_AMR} in Appendix \ref{apx:proof_of_lemmas}. 
Suppose that $\exists j \in \calN, \hat{a} \in \calA \setminus \{b\}$ so that 
\begin{align*}
& \calRA(j, S_j(\hat{a}), \hat{a}) > U_j(b) - U_j(\hat{a}), \\
& S_j(\hat{a}) = \{i \in \calN \setminus \{j\} : U_i(b) = U_i(\hat{a}), \tau_i(\hat{a})<0\}.
\end{align*}
Let 
\[
\bar{S} = \{i \in \calN \setminus \{j\} : \tau_i(\hat{a})>0\}
\]
be a set of agents who are receivers over $\hat{a}$, whose payments summing to $\calRA(j, S_j(\hat{a}), \hat{a})$ can be reallocated. Define $(\beta_i)_{\beta \in \bar{S}}$ so that $\sum_{i \in \bar{S}} \beta_i = \calRA(j, S_j(\hat{a}), \hat{a})$, and $0 \leq \beta_i \leq \tau_i(\hat{a})$, $\forall i \in \bar{S}$.
Consider the following modifications of transfers $\tau'$ where:
\[
\tau'_k(a) = \begin{cases}
\tau_j(\hat{a}) + \calRA(j, S_j(\hat{a}), \hat{a}), & k=j, a = \hat{a}\\
\tau_i(\hat{a}) - \beta_i, & k \in \bar{S}, a = \hat{a}\\
\tau_k(a), & \text{otherwise}.
\end{cases}
\]
Note that if $\tau_j(\hat a)<0$, then $\calRA(j,S_j(\hat a),\hat a)$
includes $-\tau_j(\hat a)$, so $\tau'_j(\hat a)\ge 0$; if
$\tau_j(\hat a)\ge 0$, then $\calRA(j,S_j(\hat a),\hat a)$ does not
include any reclamation from $j$, and $\tau'_j(\hat a)$ simply increases
by the amount funded by reallocations from $\bar S$.

We define $S' = S_j(\hat{a}) \cup \{j\}$, where agents in $S_j(\hat{a})$ are neutral over $\hat{a}$ to this change and agent $j$ strictly benefits over $\hat{a}$. We have $P' = (R_i')_{i \in \calN}$ for $R_j' = \hat{a}$ and $R_i' = b, \forall i \neq j$. Hence, $f(P') = \hat{a}$ by Definition \ref{dfn:lexicographic_AMR}. Thus $(P', \tau', S')$ is an IR-feasible coalitional deviation $(P, \tau, S)$. \\

\noindent \textbf{Step 2 ($\implies$):} 
We prove the contraposition.
Assume that for all $i\in\mathcal N$ and all $a\in\calA$,
\begin{equation}
\calRA(i,S_i(a),a) \le U_i(b)-U_i(a).
\label{eq:RA_bound}
\end{equation}
Suppose, for contradiction, that there exists an AMR rule $f$ and an
IR-feasible coalitional deviation $(P',\tau',S')$ such that
$f(P')=\hat a \in \calA \setminus \{b\}$.
Since $f(P)=b$ and $f(P')=\hat a$, IR-feasibility implies the existence
of a strict beneficiary $j\in S'$ satisfying
\[
U'_j(\hat a) > U_j(b),
\]
or equivalently,
\begin{equation}
U'_j(\hat a)-U_j(\hat a) > U_j(b)-U_j(\hat a).
\label{eq:strict_gain}
\end{equation}
Let
\[
S_j(\hat a)
=
\{\, i\in\mathcal N\setminus\{j\} :
U_i(b)=U_i(\hat a),\ \tau_i(\hat a)<0 \,\}
\]
denote the set of agents who are indifferent between $b$ and $\hat a$
and who donate utility at $\hat a$.
By Definition~\ref{dfn:RA}, the quantity
$\calRA(j,S_j(\hat a),\hat a)$ is the maximum increase in $j$'s utility
at $\hat a$ that can be achieved using only:
(i) reallocation of transfers from agents in $S_j(\hat a)$, and
(ii) reclamation of $j$'s own negative transfers at $\hat a$.
Define the surplus
\[
\beta
=
\bigl(U'_j(\hat a)-U_j(\hat a)\bigr)
-
\calRA(j,S_j(\hat a),\hat a).
\]
By (\ref{eq:RA_bound}) and (\ref{eq:strict_gain}), we have $\beta>0$.
Hence, at least $\beta$ units of $j$'s gain at $\hat a$ must be funded by
means other than the reallocations accounted for by
$\calRA(j,S_j(\hat a),\hat a)$.

We now show that any such additional funding necessarily violates IR-feasibility. There are two exhaustive possibilities.

\medskip
\noindent\textbf{Option 1 (additional donation at $\hat a$).}
There exists an agent $k$ such that
\[
\tau'_k(\hat a) < \tau_k(\hat a) \le 0
\]
who  makes an additional donation at $\hat a$.
By Lemma~\ref{lem:observable_participation}, this implies $k\in S'$.
Moreover, by definition of full coverage of $b$ under $u+\tau$,
we have
\begin{equation}
U_k(b) > U_k(\hat a).
\label{eq:pref_b}
\end{equation}
The strict decrease in transfers at $\hat a$ yields
\begin{equation}
U'_k(\hat a) < U_k(\hat a).
\label{eq:utility_drop}
\end{equation}
Combining (\ref{eq:pref_b}) and (\ref{eq:utility_drop}) gives
$U'_k(\hat a) < U_k(b)$, violating IR-feasibility for agent $k$.

\medskip
\noindent\textbf{Option 2 (reallocation via non-indifferent donors).}
Otherwise, the surplus $\beta$ must be funded via reallocation of
existing transfers.
Thus, there exists an agent $\ell$ such that
\[
\tau_\ell(\hat a)>0
\quad\text{and}\quad
\tau'_\ell(\hat a)<\tau_\ell(\hat a),
\]
so that some of $\ell$'s incoming transfers at $\hat a$ are reduced.
This reduction must be effected by some existing donor agent $k \notin S_j(\hat{a})$ with $\tau_k(\hat{a}) < 0$ who necessarily belongs to $S'$.

Since $k\notin S_j(\hat a)$ and $b$ has full coverage under $u+\tau$, $U_k(b) > U_k(\hat a)$, which entails $U_k(b) > U'_k(\hat a)$. This contradicts IR-feasibility for agent $k$. \\

In both cases, we obtain a contradiction.
Therefore, no such IR-feasible deviation $(P',\tau',S')$ with
$f(P')=\hat a$ can exist, completing the proof.
\end{proof}

\end{document}